\newtheorem{prop}{Proposition}
\newcommand{\fbs}{\mathrm{F}}
\newcommand{\pr}{\mathrm{P}}
\newcommand{\psa}{\mathrm{P}_{\scriptstyle s}^{\scriptstyle(A)}}
\newcommand{\psvonea}{\mathrm{P}_{\scriptstyle s,V_1}^{\scriptstyle(A)}}
\newcommand{\psvtwoa}{\mathrm{P}_{\scriptstyle s,V_2}^{\scriptstyle(A)}}
\newcommand{\psvoneb}{\mathrm{P}_{\scriptstyle s,V_1}^{\scriptstyle(B)}}
\newcommand{\psvtwob}{\mathrm{P}_{\scriptstyle s,V_2}^{\scriptstyle(B)}}
\newcommand{\psvonec}{\mathrm{P}_{\scriptstyle s,V_1}^{\scriptstyle(C)}}
\newcommand{\psvtwoc}{\mathrm{P}_{\scriptstyle s,V_2}^{\scriptstyle(C)}}
\newcommand{\psvoned}{\mathrm{P}_{\scriptstyle s,V_1}^{\scriptstyle(D)}}
\newcommand{\psvtwod}{\mathrm{P}_{\scriptstyle s,V_2}^{\scriptstyle(D)}}
\newcommand{\opa}{\mathtt{OP}_{\scriptstyle A}}
\newcommand{\opaa}{\mathtt{OP}_{\scriptstyle a}}
\newcommand{\opbone}{\mathtt{OP}_{\scriptstyle B}^{\scriptstyle(\alpha > 0.5)}}
\newcommand{\opbtwo}{\mathtt{OP}_{\scriptstyle B}^{\scriptstyle(\alpha \leq 0.5)}}
\newcommand{\opcone}{\mathtt{OP}_{\scriptstyle C}^{\scriptstyle(\alpha > 0.5)}}
\newcommand{\opctwo}{\mathtt{OP}_{\scriptstyle C}^{\scriptstyle(\alpha \leq 0.5)}}
\newcommand{\opdone}{\mathtt{OP}_{\scriptstyle D}^{\scriptstyle(\alpha > 0.5)}}
\newcommand{\opdtwo}{\mathtt{OP}_{\scriptstyle D}^{\scriptstyle(\alpha \leq 0.5)}}
\newcommand{\foneone}{F_{1}^{(1)}}
\newcommand{\fonetwo}{F_{1}^{(2)}}
\newcommand{\ftwoone}{F_{2}^{(1)}}
\newcommand{\ftwotwo}{F_{2}^{(2)}}
\newcommand{\xoneone}{x_{1}^{(1)}}
\newcommand{\xonetwo}{x_{1}^{(2)}}
\newcommand{\xtwoone}{x_{2}^{(1)}}
\newcommand{\xtwotwo}{x_{2}^{(2)}}
\newcommand{\goneone}{\gamma_{1}^{(1)}}
\newcommand{\gonetwo}{\gamma_{1}^{(2)}}
\newcommand{\gtwoone}{\gamma_{2}^{(1)}}
\newcommand{\gtwotwo}{\gamma_{2}^{(2)}}
\newcommand{\psvoneaone}{\mathrm{P}_{\scriptstyle s,V_1}^{\scriptstyle(a,1)}}
\newcommand{\psvoneatwo}{\mathrm{P}_{\scriptstyle s,V_1}^{\scriptstyle(a,2)}}
\newcommand{\psvoneathree}{\mathrm{P}_{\scriptstyle s,V_1}^{\scriptstyle(a,3)}}
\newcommand{\psvtwoaone}{\mathrm{P}_{\scriptstyle s,V_2}^{\scriptstyle(a,1)}}
\newcommand{\psvtwoatwo}{\mathrm{P}_{\scriptstyle s,V_2}^{\scriptstyle(a,2)}}
\newcommand{\psvtwoathree}{\mathrm{P}_{\scriptstyle s,V_2}^{\scriptstyle(a,3)}}
\newcommand{\moneone}{m_{\scriptscriptstyle 1}^{\scriptscriptstyle (1)}}
\newcommand{\monetwo}{m_{\scriptscriptstyle 1}^{\scriptscriptstyle (2)}}
\newcommand{\mtwoone}{m_{\scriptscriptstyle 2}^{\scriptscriptstyle (1)}}
\newcommand{\mtwotwo}{m_{\scriptscriptstyle 2}^{\scriptscriptstyle (2)}}
\newcommand{\omoneone}{\Omega_{\scriptscriptstyle 1}^{\scriptscriptstyle (1)}}
\newcommand{\omonetwo}{\Omega_{\scriptscriptstyle 1}^{\scriptscriptstyle (2)}}
\newcommand{\omtwoone}{\Omega_{\scriptscriptstyle 2}^{\scriptscriptstyle (1)}}
\newcommand{\omtwotwo}{\Omega_{\scriptscriptstyle 2}^{\scriptscriptstyle (2)}}
\newcommand{\monei}{m_{\scriptscriptstyle 1}^{\scriptscriptstyle (i)}}
\newcommand{\mtwoi}{m_{\scriptscriptstyle 2}^{\scriptscriptstyle (i)}}
\newcommand{\omonei}{\Omega_{\scriptscriptstyle 1}^{\scriptscriptstyle (i)}}
\newcommand{\omtwoi}{\Omega_{\scriptscriptstyle 2}^{\scriptscriptstyle (i)}}
\begin{document}

\twocolumn
\baselineskip 1.00pc

\title{ Cache-Aided Non-Orthogonal Multiple Access\\for 5G-Enabled Vehicular Networks}
 \author{Sanjeev~Gurugopinath,~\IEEEmembership{Member,~IEEE},~Paschalis~C.~Sofotasios,~\IEEEmembership{Senior~Member,~IEEE}, Yousof~Al-Hammadi,~\IEEEmembership{Member,~IEEE},~and~Sami~Muhaidat,~\IEEEmembership{Senior~Member, IEEE}
	
\thanks{S. Gurugopinath is with the Department of Electronics and Communication Engineering, PES University, Bengaluru 560085, India, (email: {\rm sanjeevg@pes.edu}).}

	\thanks{P. C. Sofotasios is with the Center for Cyber-Physical Systems, Department of Electrical and Computer Engineering, Khalifa University of Science and Technology, PO Box 127788, Abu Dhabi, UAE, and also with the Department of Electronics and Communications Engineering, Tampere University of Technology, 33101 Tampere, Finland (email: {\rm p.sofotasios@ieee.org}).}
	
		\thanks{Y. Al-Hammadi is  with the Department of Electrical and Computer Engineering, Khalifa University of Science and Technology, PO Box 127788, Abu Dhabi, UAE (email: {\rm yousof.alhammadi@ku.ac.ae}).}

	\thanks{S.  Muhaidat is with the  Center for Cyber-Physical Systems, Department of Electrical and Computer Engineering, Khalifa University of Science and Technology, PO Box 127788, Abu Dhabi, UAE and with the Institute for Communication Systems, University of Surrey, GU2 7XH, Guildford, UK, (email: {\rm muhaidat@ieee.org}).}
 }

\maketitle

\begin{abstract}
The increasing demand for rich multimedia services and the emergence of the  Internet-of-Things (IoT) pose challenging requirements  for the next generation vehicular networks. 
Such challenges are largely related to high spectral efficiency and low latency requirements in the context of massive content delivery and increased connectivity.
 In this respect, caching and non-orthogonal multiple access (NOMA) paradigms have been recently proposed as potential solutions to effectively address some of these key challenges. 
 In the present contribution, we introduce cache-aided NOMA as an enabling technology for vehicular networks.  
 In this context, we first  consider the full file caching case, where each vehicle caches and requests entire files using the NOMA principle.  
 Without loss of generality, we consider a two-user vehicular network communication scenario under  double Nakagami$-m$ fading conditions and propose an optimum power allocation policy. 
 To this end, an optimization problem that maximizes the overall probability of successful decoding of files at each vehicle is formulated and solved.
   Furthermore, we consider the case of split file caching, where each file is divided into two parts.
    A joint power allocation optimization problem is formulated, where  power allocation across vehicles and cached split files is investigated.
    The offered analytic results are corroborated by extensive results from computer simulations and interesting insights are developed. 
   Indicatively, it is shown that  the proposed caching-aided NOMA outperforms the conventional NOMA technique.
\end{abstract}

\begin{IEEEkeywords}
Caching, double Nakagami$-m$ fading channels, non-orthogonal multiple access, vehicle-to-vehicle communications, vehicular networks.
\end{IEEEkeywords}

\IEEEpeerreviewmaketitle

\section{Introduction} \label{SecIntro}

Recent research advances in information and wireless technologies have led to growing interest in the development of  intelligent transportation systems (ITS), which  promise significant improvements in road safety and traffic flow, and will enable new data services \cite{Tuohy_ITS_2015, Gerla_WF-IoT_2014}. In this context, vehicular networks consist of cooperative communication terminals that relay information with each other and also exchange information with fixed infrastructure, e.g., road side unit (RSU).  Potential applications of vehicular networks are diverse and pervasive; for example, transportation can be improved through fast dissemination of road and traffic information  and coordination of vehicles at critical points such as highway entries and other intersections. In addition, numerous challenging new applications of interest can be realized, e.g., high-speed internet access, cooperative downloading, network gaming among passengers of adjacent vehicles, and virtual, video-enabled meetings among co-workers in different vehicles.

As a dedicated short-range communications (DSRC) technology, wireless access for vehicular environments (WAVE), IEEE $802.11$p, enables data rates in the range  $6-27$ Mbps over short distances \cite{Vinel_ComLet_2012}.  Furthermore, it provides high-speed vehicle-to-vehicle (V2V) and vehicle-to-infrastructure (V2I) data transmission.
In this context, the 3rd generation partnership project (3GPP) recently  published V2X (vehicle-to-everything) specifications based on LTE as the underlying technology. However, in the also recently proposed internet-of-vehicles (IoV) ecosystem, the vehicles are envisioned to share and access enormous data quickly from the cloud, and are expected to be able to process them with high performance and  marginal overhead. 
Moreover, due to the high mobility scenarios, network topology can change quickly and the handover  among different base stations and vehicles becomes  more frequent. This poses several challenges to the effective  realization of reliable communications with low latency in both WAVE and LTE based vehicular networks. 
Therefore, various 5G enabling technologies, such as multiple-input multiple-output (MIMO), heterogeneous network (HN)-based communications, millimetre-wave (mmWave) communications and ultra-wideband (UWB) communications can be considered  to improve the overall efficiency of a vehicular network. 

A major challenge related to the above scheme  is that these technologies require  a significant amount of backhaul overhead.  
Yet,  it has been recently shown that caching techniques, in which popular files are stored a priori at  vehicles, are known to reduce the backhaul traffic. The basic idea of caching is to store the popular contents in different geographical locations during the off-peak time. This local storage arrangement enables a duplicate transmission of popular files, thereby increasing the spectral efficiency of the vehicular network and, subsequently, reducing latency. Also, it has been highlighted in the literature that installing memory units at the user-end is significantly  cost effective   compared to increasing the backhaul overhead \cite{Li_CST_2018}. 
Based on this, numerous investigations on  the integration of several caching techniques with technologies such as heterogeneous networks \cite{Maddah-Ali_TCOMMag_2016, Yang_TWC_2016, Cui_TWC_2017}, device-to-device communications \cite{Ji_TIT_2016}, \cite{Gregori_JSAC_2016}, and with cloud- and fog-radio access networks \cite{Tondon_ISIT_2016} have been carried out. 

It is recalled that non-orthogonal multiple access (NOMA) has been recently proposed as a promising multiple access solution to address some of the challenges in 5G networks (\cite{Liu_IEEE_2017, Wan_TWC_2018, Islam_TWC_2018, Kader_DSP_2019}).  In particular, NOMA is envisioned to increase the system throughput and to  support massive connectivity. 
More recently, it was shown that NOMA is capable of offering  performance benefits in visible light communication systems, where enhanced performance gains in terms of metrics such as spectral and energy efficiencies were reported \cite{Islam_5GTechFocus_2017}, \cite{Marshoud_TWC_2018}. Similarly, in the context of vehicular networks, it   allows different vehicles to share the same time and frequency resources \cite{Chen_JSAC_2017}. 
It is recalled that NOMA can be realized via two different approaches, namely, (a) power domain, and (b) code domain \cite{Ding_JSAC_2017}. In the power domain NOMA (PD-NOMA), users are assigned different power levels over the same resources; on the contrary,  in the code domain NOMA (CD-NOMA), multiplexing is carried out using spreading sequences for each user, similar to the code division multiple access (CDMA) technology. In this work, by NOMA, we refer to the PD-NOMA.

It is noted that NOMA can be realized by allocating  different power levels to different users, according to their channel conditions. This power allocation scheme enables the efficient implementation of successive interference cancellation (SIC), which is often considered to remove multi-user interference in multi-user detection systems. In this case, SIC is carried out  at users with the best channel conditions and is performed in descending order of the channel. 
It is also pointed out that significant performance improvements have been shown in vehicular networks due to the amalgamation of the NOMA technology with other techniques such as MIMO communications \cite{Chen_JSAC_2017}, \cite{Zheng_WCL_2017}, mmWave communications and spatial modulation \cite{Chen_JSAC_2017}. 
In what follows, we briefly describe some of the related techniques, and then highlight the contributions of the present analysis. 

\subsection{Related Work and Motivation} \label{SubSecRelatedWork}

\subsubsection{Caching} \label{SubSubSecCaching}

The advantages and applications of caching for cellular systems have been extensively investigated in the recent literatare \cite{Li_CST_2018}, \cite{Wang_CST_2018}. 
In this context, several metrics have been proposed and analyzed  to characterize the performance of caching algorithms, such as hitting rate \cite{Zaidi_ICCW_2015}, outage/coverage probability \cite{Bastug_ISWCS_2014}, area spectral efficiency \cite{Zhang_TCOM_2014}, and average throughput \cite{Ji_TIT_2016}. Likewise, cooperative caching in vehicular ad hoc networks (VANETs) was discussed  in \cite{Glass_CST_2017}, where it  was argued that cooperative caching in VANETs rely on three major components, namely discovery, management and consistency. 
The main task in the discovery component is to search the presence of a file requested by a vehicle in the network using, e.g., broadcasting techniques such as split caching \cite{Majd_GLOBECOM_2014} and shared caching \cite{Caetano_ICNC_2010}. The next task is carried out through the management component, which is to decide on (a) whether to cache the requested file, and (b) where to cache it, for a smoother duplicate transmission. Lastly, the task driven by the consistency component is to decide on how long the cached content should be retained in the network.

\subsubsection{Non-Orthogonal Multiple Access} \label{SubSubSecNOMA}

It is recalled that NOMA is considered as one of the enabling technolgies for 5G  \cite{Dai_TComMag_2015}, \cite{Han_Access_2018}. However,  the use of NOMA in vehicular networks has   received attention only  recently, where it has been shown that it significantly outperforms the traditional orthogonal multiple access based systems \cite{Di_GLOBECOM_2017}. 
Based on this, the spectral efficiency and resource allocation for NOMA based vehicle-to-everything (V2X) systems have been investigated in \cite{Di_TWC_2017}. 
Then, graph theory-based encoders and reliable decoders using the belief propagation algorithm for NOMA-based V2X communication systems was addressed in \cite{Khoueiry_VTC_2017},  along with the corresponding   outage capacity analysis.
 In \cite{Qian_JSAC_2017}, the performance of NOMA-enabled vehicle-to-small-cell (V2S), which constitutes a special case of V2I, has been investigated in the context of analyzing the joint optimization of cell association and power control for weighted sum rate maximization.
  The utility of NOMA along with spatial modulation technique was proposed for V2V MIMO communication in \cite{Chen_JSAC_2017}, and the achievable throughput as well as the corresponding optimal power allocation policies were derived.

\subsubsection{Caching-Aided NOMA} \label{SubSubSecCacheNOMA}

To the best of our knowledge, caching-aided NOMA systems have not been investigated in the context of vehicular networks. However, as mentioned earlier, there are some sporadic results reported on the advantages of cache-aided NOMA in cellular systems.
For example, it was shown in  \cite{Doan_ICC_2018} that exploited pre-cached contents can  reduce/mitigate the incurred interference,  while employing SIC for NOMA. An optimum power allocation for the considered network is investigated, aiming to maximize  the probability of successful decoding of files at each user. However, the simplistic  Rayleigh fading conditions were assumed, which is not practically realistic in  vehicular networks. Moreover, the model in \cite{Doan_ICC_2018} only considers the full file caching case; that is, the authors assume that in the caching phase, the files are cached as a whole, which is largely restrictive. On the contrary, only a split file caching framework was considered in \cite{Ding_TCOM_2018}. The optimum power allocation and the performance of the proposed system are characterized by the achievable rate region. Additionally, the framework in \cite{Ding_TCOM_2018} does not take into consideration the detrimental  impairments due to encountered fading effects.

\subsection{Contributions} \label{SubSecContributions}

We envision cache-aided  NOMA for vehicular networks  particularly useful because vehicles can exploit their own cache contents to reduce interference due to signals from other vehicles. That is, for a particular vehicle, if a file  is present in its own cache and requested by another vehicle, then it can utilize it  during the request phase in order to reduce the interference in the received  superimposed NOMA signal. Hence, the performance of a cache-aided NOMA system is improved by exploiting its  cache contents. Furthermore, the interference reduction due to caching also helps to reduce the computational complexity of  SIC. Although cache-aided NOMA for cellular networks has received attention in the recent literature \cite{Doan_ICC_2018}, \cite{Ding_TCOM_2018}, the available framework is not directly applicable to vehicular networks.  This is largely due the following challenges: high mobility, network partitioning, leading to several isolated clusters of nodes, rapidly changing network topology with intermittent connectivity, and the harsh propagation environments. The main focus of this work is the propagation environment, where the popular Rayleigh/Nakagami fading assumptions are no longer valid, due to mobility of vehicles and low elevations of their antennas. Consequently, in order to understand the full potential of cache-aided NOMA in vehicular networks, an in-depth analysis of the system performance under realistic channel models is required. To this end, contrary to the pioneering works in this area of research, we consider the cascaded (double) Nakagami fading channel model, which provides a realistic description of inter-vehicular channels. This can be justified by the fact that the received signal at a vehicle is due to a large number of signals reflected from statistically independent multiple scatterers \cite{Karagiannidis_TCOM_2007}, \cite{Ilhan_TVT_2009}. Accordingly, the double Nakagami fading model helps in studying the artifacts due to fading as a whole, in vehicular networks with high mobility.

In the present contribution, we introduce  cache-aided NOMA  for vehicular networks, over cascaded fading channels that follow the the double/cascaded Nakagami$-m$ distribution \cite{Karagiannidis_TCOM_2007}, \cite{Ilhan_TVT_2009}. 
Without loss of generality, we consider a two-user network, but  extension  to a multi-user network is straightforward. 
In this context, we consider the cases of (i) full file caching, where each vehicle can cache files as a whole during the caching phase \cite{Doan_ICC_2018};  and (ii) split file caching, where caching is performed in parts \cite{Ding_TCOM_2018}. 
For simplicity, we also consider the case of two split files, i.e, files are split into two parts, which can be readily extended  to the general case  of multiple part  file splitting. Under this setup, we characterize the performance of the proposed system in terms of the probability of successful decoding of intended files at both vehicles. In each case, we formulate an optimization problem to find the optimum power profile of each user  such that the overall probability of successful decoding is maximized. In the case of full file caching, we analytically show that the cost function is concave. 
The main contributions of this paper are summarized below:
\begin{itemize}
 \item We propose  cache-aided NOMA as a promising solution to address  the spectral efficiency requirements in 5G-enabled  vehicular networks.
 \item We analytically characterize the performance of the proposed cache-aided NOMA system under realistic double Nakagami$-m$ fading conditions in terms of probability of successful decoding of files at each vehicles.
 \item We consider the case of full file caching, where each vehicle can cache full set of requested files during the caching phase. We formulate an optimization problem to find the optimum power profile of each vehicle, such that the probability of successful decoding of targeted files at each vehicle is maximized. Under the effects of double Nakagami$-m$ fading conditions, we further show that the cost function of this optimization problem is concave.
 \item We  consider a practically relevant scenario of split-file caching, where a joint power allocation optimization problem is formulated  in order to find the optimum power allocation across vehicles and within each file. Through numerical analysis, we prove  that the cost function for this joint optimization problem is concave.
 \item We quantify the performance enhancement due to the proposed cache-aided NOMA over the conventional NOMA and cache-aided OMA. The impact of the double Nakagami$-m$ distributed fading effects    on the performance is also investigated.
\end{itemize}

To the best of the authors' knowledge, the above results have not been previously reported in the open literature. 

\subsection{Organization} \label{SubSecOrganization}

The remainder of this paper is organized as follows:  The system setup and the cache-aided NOMA architecture are described in Section~\ref{SecSysModel}. The optimal power allocation problem to maximize the probability of successful decoding of files in a two user vehicular network with full file caching is proposed and discussed in detail in Section~\ref{SecFullFile}. The principle underlying  the joint power allocation in the case where each requested file is split into two parts  and cached in order is presented in Section~\ref{SecSplitFile}. The corresponding numerical results and useful discussions are provided  in Section~\ref{SecRes}, while closing remarks  are given in Section~\ref{SecConc}.

\section{System Model} \label{SecSysModel}

\begin{figure}
\begin{center}
 \includegraphics[scale=0.5]{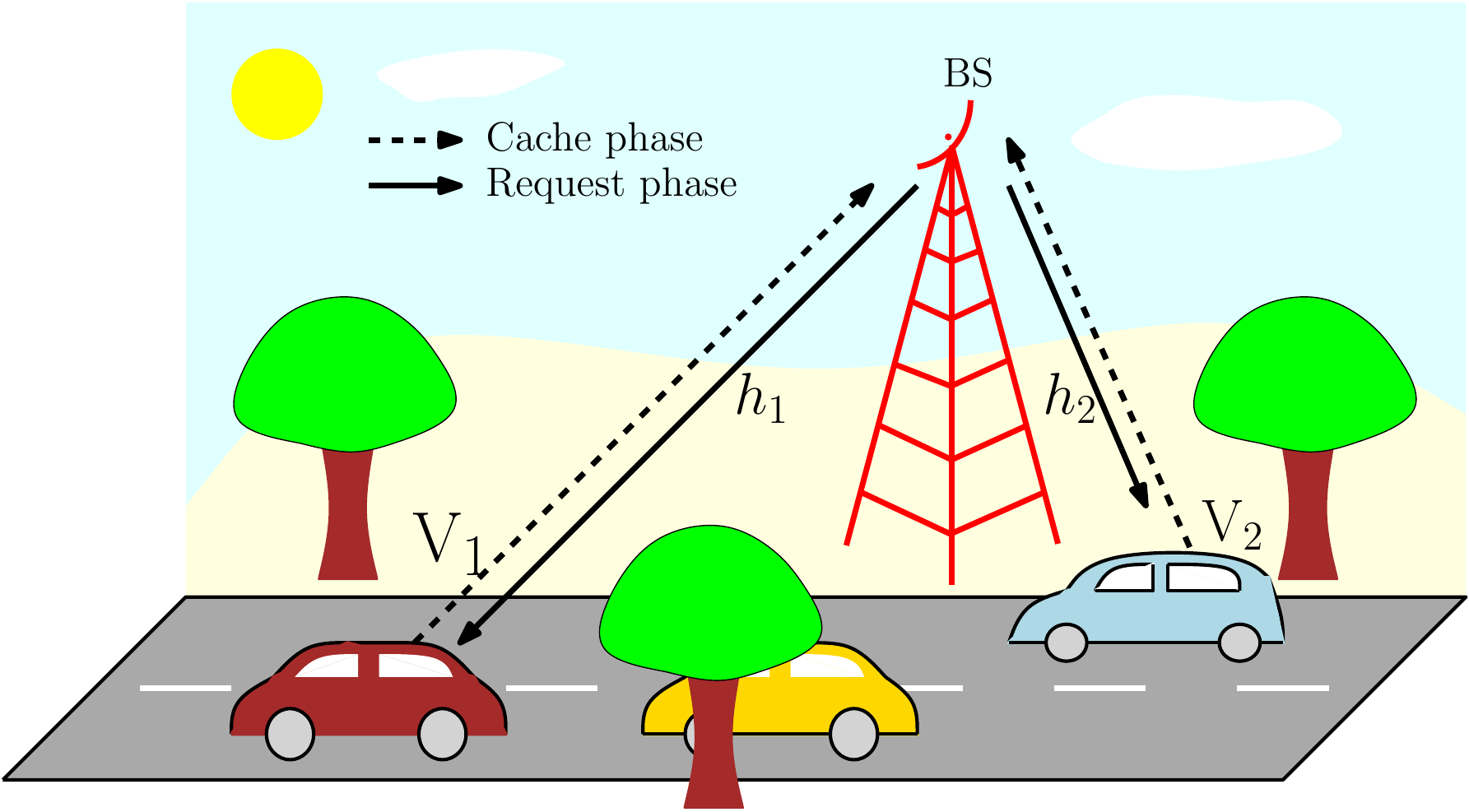}
 \caption{System model for the proposed caching-aided NOMA vehicular system, where a BS serves vehicles $V_1$ and $V_2$ simultaneously.} \label{FigNOMAv2vsysmodel}
\end{center}
\end{figure}

\subsection{Network Architecture} \label{SubSecNetArch}

We consider a vehicle-to-everything (V2X) network, which incorporates  V2I (vehicle-to-infrastructure), V2V (vehicle-to-vehicle), V2P (vehicle-to-pedestrian), and V2G (vehicle-to-grid) scenarios. As depicted in Fig.~\ref{FigNOMAv2vsysmodel} and without loss of generality,   we consider a base station (BS) serving two vehicles denoted by $V_1$ and $V_2$, respectively. Our analysis can also be extended to a network with more than two users, which can be realized by grouping them into pairs in clusters and  assigning orthogonal channels to each cluster. 
Also,  it is assumed that  $h_1$ and $h_2$ denote the complex channel coefficients between BS--$V_1$, and BS--$V_2$, respectively. It is recalled here that in vehicular communication systems\footnote{We consider a quasi-static fading channel between the BS and each users, typically observed in peak traffic hours \cite{Chen_JSAC_2017}.},  the channel gains $g_1 \triangleq |h_1|$ and $g_2 \triangleq |h_2|$ are   accurately modeled by a product of two independent cascaded Nakagami-$m$ distributions, i.e,   $g_1 \sim \mathcal{N}^2((\moneone, \omoneone), (\mtwoone, \omtwoone))$, and $g_2 \sim \mathcal{N}^2((\mtwoone, \omtwoone), (\mtwotwo, \omtwotwo))$, respectively, \cite{Karagiannidis_TCOM_2007}, \cite{Ilhan_TVT_2009}. 
To this effect, their probability density functions, denoted by $f_{g_1}(g)$ and $f_{g_2}(g)$, are given by \cite{Karagiannidis_TCOM_2007}
\begin{align}
f_{g_i}(g) = \frac{2 G_{0,2}^{2,0}\left(\left .\frac{\monei \mtwoi g^2}{\omonei \omtwoi} \right \rvert_{\monei, \mtwoi}^{\hspace{0.3cm} \text{------} } \right)}{g \Gamma(\monei) \Gamma(\mtwoi)},
\end{align}
for $i=~1,~2$, where $\Gamma(\cdot)$ and $G_{m,n}^{p,q}(\cdot)$ denote the  Euler's gamma and Meijer-G functions, respectively \cite{refGradBk}. 
The double Nakagami$-m$ distribution is considered a cascaded fading model and it constitutes a generalization of the double Rayleigh distribution, which is included as a special case \cite{Salo_TAP_2006}. Moreover,  the square of the double Nakagami$-m$ distributed random variables, that is, $g^2_1$ and $g^2_2$ have double Gamma distribution, and their respective densities are given   by a standard transformation of random variables  as:
\begin{align}
& f_{g^2_i}(g) = \frac{2 g^{\frac{\monei-\mtwoi}{2}-1}}{\Gamma(\monei) \Gamma(\mtwoi) \left(\frac{\omonei \omtwoi}{\monei \mtwoi}\right)^{\frac{\monei+\mtwoi}{2}}} \nonumber \\
& ~~~~~~~~~~~~~~~~~~~ \times \mathcal{K}_{\monei-\mtwoi} \left(2 \sqrt{\frac{\monei \mtwoi}{\omonei \omtwoi} g}\right), \label{CDFForV1}
\end{align}
for $i=1,~2$, where $\mathcal{K}_n(\cdot)$ is the modified Bessel function of the second kind, with order $n \in \mathbb{R}$ \cite{refGradBk}. 
Based on this,  the corresponding CDFs are given by
\begin{align}
F_{g^2_i}(g) = \frac{G_{1,3}^{2,1}\left(\left .\frac{\monei \mtwoi g}{\omonei \omtwoi} \right \rvert_{\monei, \mtwoi, 0}^{\hspace{0.5cm} \text{1} } \right)}{\Gamma(\monei) \Gamma(\mtwoi)},
\end{align}
for $i=1,~2$. 
In what follows, we introduce the model on caching.

\subsection{Caching Model and Assumptions} \label{SubSecCacheModel}
We assume  that $V_1$ and $V_2$ are equipped with a finite capacity cache of size $\kappa$. Let $\fbs \triangleq \{F_1, F_2, \ldots, F_T\}$ denote the finite set of files available at the BS. The popularity profile on $\fbs$ is then modeled by the popular Zipf distribution \cite{Li_CST_2018}, with a skewness control parameter $\zeta > 0$.\footnote{Our analysis is valid  even for any meaningful popularity profile. In general, the popularity profile is only used to dictate which files to cache at $V_1$ and $V_2$.} In particular, the popularity of a file $F_t \in \fbs$ is given by the probability
\begin{align}
q_t = \dfrac{1}{t^\zeta \sum_{i=1}^T \frac{1}{i^\zeta}}, ~~ t=1,\ldots,T,
\end{align}
such that $q_1 > q_2 > \cdots > q_T > 0$, and $\sum_{t=1}^T q_t = 1$. In the \emph{caching phase}, $V_1$ and $V_2$ fetch and store a set of files from $\mathrm{F}$, denoted by $\mathrm{F}_{V_1} \subset \mathrm{F}$ and $\mathrm{F}_{V_2} \subset \mathrm{F}$, during the off-peak hours. With the popularity profile considered above, the optimal caching policy would be to cache from the most popular file to the least popular file. In the \emph{requesting phase}, without loss of generality, we assume that users $V_1$ and $V_2$ request for files $F_1 \in \mathrm{F}$ and $F_2 \in \mathrm{F}$, respectively. Furthermore, it is  assumed that the BS has perfect knowledge of $\mathrm{F}_{V_1}$ and $\mathrm{F}_{V_2}$, and that $V_1$ and $V_2$ possess knowledge of the power allocation scheme at the BS. For the former assumption, the information can be obtained at the BS during the requesting phase, i.e., when $V_1$ and $V_2$  share their cache status. For the latter, the BS can broadcast the information to $V_1$ and $V_2$ before communicating their files.

\subsection{Downlink Communication Using NOMA} \label{SubSecNOMA}
Let $x_1$ and $x_2$ be the signals corresponding to $F_1$ and $F_2$. The BS uses the NOMA configuration  to send a superimposed signal to both $V_1$ and $V_2$, which can decode their respective signals using the successive interference cancellation (SIC) technique. Let $y_1$ and $y_2$ be the received signals at $V_1$ and $V_2$, respectively, which are given as:
\begin{align}
& y_1 = h_1 (\sqrt{\alpha P} x_1 + \sqrt{(1-\alpha)P} x_2) + w_1  
\end{align}
and
\begin{align}
 y_2 = h_2 (\sqrt{\alpha P} x_1 + \sqrt{(1-\alpha)P} x_2) + w_2,
\end{align}
where, $P$ is the average transmit power at the BS, $\alpha$ $\in (0,1)$ denotes the fraction of power assigned to $V_1$, and $w_1$ and $w_2$ are independent, circularly symmetric complex Gaussian distributions with zero mean and variance $\sigma_1^2$ and $\sigma_2^2$, respectively. Concisely, this is denoted as $w_1 \sim \mathcal{CN}(0,\sigma_1^2)$ and $w_2 \sim \mathcal{CN}(0,\sigma_2^2)$. As mentioned earlier, $|h_1| \sim \mathcal{N}^2((\moneone, \omoneone), (\mtwoone, \omtwoone))$ and $|h_2| \sim$ $\mathcal{N}^2((\monetwo, \omonetwo), (\mtwotwo, \omtwotwo))$.

It is important to observe that in the considered caching-based NOMA in vehicular networks, each user can decode its own signal by reducing or removing the interference due to the signal of the other user, by using the contents obtained during the caching phase. This increases the probability of successful decoding of the signals of each user. We assume that the files $F_1$ and $F_2$ can be decoded successfully by either $V_1$ or $V_2$, when the received SINR is greater than $\gamma_1 > 0$ and $\gamma_2 > 0$, respectively. Depending on the files cached by $V_1$ and $V_2$ during the caching phase, one of the following four scenarios can occur:
\begin{description}
 \item[Case~A:]~~$V_1$ has cached $F_2$, and $V_2$ has cached $F_1$.
 \item[Case~B:]~~$V_1$ has cached $F_2$, but $V_2$ has missed caching $F_1$.
 \item[Case~C:]~~$V_1$ has missed caching $F_2$, but $V_2$ has cached $F_1$.
 \item[Case~D:]~~$V_1$ has missed caching $F_2$, and $V_2$ has missed caching $F_1$.
\end{description}
Mathematically, the above  scenarios can be represented  as:
\begin{description}
 \item[Case~A:]~~$F_2 \in \mathrm{F}_{V_1}$, and $F_1 \in \mathrm{F}_{V_2}$.
 \item[Case~B:]~~$F_2 \in \mathrm{F}_{V_1}$, and $\{F_1\}\bigcap \mathrm{F}_{V_2} = \Phi$, where $\Phi$ is the null set.
 \item[Case~C:]~~$\{F_2\}\bigcap \mathrm{F}_{V_1} = \Phi$, and $F_1 \in \mathrm{F}_{V_2}$.
 \item[Case~D:]~~$\{F_2\}\bigcap \mathrm{F}_{V_1} = \Phi$, and $\{F_1\}\bigcap \mathrm{F}_{V_2} = \Phi$.
\end{description}
To this effect, we intend to find the optimal power allocation policy in each case, such that their respective probabilities of successful decoding of files $F_1$ and $F_2$ at $V_1$ and $V_2$ are maximized. In the next section, we provide the mathematical details of each case and formulate the corresponding optimization problems.

\section{Full File Caching: Optimal Power Allocation} \label{SecFullFile}
In this section, we provide the mathematical details on the optimal values of $\alpha$ for all the above mentioned cases, such that their respective probabilities of successful decoding of $F_1$ and $F_2$ at $V_1$ and $V_2$ are maximized. Towards this end, we first derive the expressions for the probability of successful decoding of files in each case.

\subsection{Probability of Successful Decoding} \label{SubSecProbSucc}

\subsubsection{Case A} \label{SubSecCaseA}
When both $V_1$ and $V_2$ have each others' files at their cache, irrespective of the value of $\alpha$, $V_1$ and $V_2$ will be able to decode $F_1$ and $F_2$ by canceling out the signals due to $F_2$ and $F_1$, respectively. Therefore, the probabilities of successful decoding of $F_1$ and $F_2$ at $V_1$ and $V_2$ are given by
\begin{align}
 \psvonea = \pr\left\{\frac{\alpha P g_1^2}{\sigma_1^2} > \gamma_1\right\}
\end{align}
and
\begin{align}
 \psvtwoa = \pr\left\{\frac{(1-\alpha) P g_2^2}{\sigma_2^2} > \gamma_2\right\} 
\end{align}
respectively. Recall that $\gamma_1$ and $\gamma_2$ denote the thresholds on SINR for successful decoding of files $F_1$ and $F_2$, at either $V_1$ or $V_2$.

\subsubsection{Case B} \label{SubSecCaseB}
As in the previous case, since $F_2 \in \mathrm{F}_{V_1}$, $V_1$ can decode $F_1$ irrespective of $\alpha$, with probability
\begin{align}
& \psvoneb = \pr\left\{\frac{\alpha P g_1^2}{\sigma_1^2} > \gamma_1\right\}.
\end{align}
However, since $V_2$ has not cached $F_1$, its probability of successful decoding depends on whether it is near or far from the BS. Each of these cases are considered separately. To this end, when $\alpha \leq 0.5$, that is, when $V_2$ is far from the BS, it treats the signal from $V_1$ as interference and decodes $F_2$ with a success probability  
\begin{align}
& \psvtwob = \pr\left\{\frac{(1-\alpha) P g_2^2}{\alpha P g_2^2 + \sigma_2^2} > \gamma_2\right\}.
\end{align}
On the contrary, when $\alpha > 0.5$, that is when $V_2$ is closer, it first decodes $F_1$ and then $F_2$, following SIC. The success probability in this case is given  by
\begin{align}
& \psvtwob = \pr\left\{\frac{\alpha P g_2^2}{(1-\alpha) P g_2^2 + \sigma_2^2} > \gamma_1\right\} \nonumber \\
& ~~~~~~~~~~~~~~~~ \times \pr\left\{\frac{(1-\alpha) P g_2^2}{\sigma_2^2} > \gamma_2\right\}.
\end{align}

\subsubsection{Case C} \label{SubSecCaseC}
Similar to the case B, irrespective of the value of $\alpha$, $V_2$ decodes $F_2$ successfully with a probability
\begin{align}
& \psvtwoc = \pr\left\{\frac{(1-\alpha) P g_2^2}{\sigma_2^2} > \gamma_2\right\}.
\end{align}
On the contrary, when $\alpha > 0.5$, $V_1$ decodes $F_1$ successfully with a probability
\begin{align}
& \psvonec = \pr\left\{\frac{\alpha P g_1^2}{(1-\alpha) P g_1^2+ \sigma_1^2} > \gamma_1\right\},
\end{align}
and with a corresponding success probability of
\begin{align}
& \psvonec = \pr\left\{\frac{(1-\alpha) P g_1^2}{\alpha P g_1^2 + \sigma_1^2} > \gamma_2\right\}  \pr\left\{\frac{\alpha P g_1^2}{\sigma_1^2} > \gamma_1\right\},
\end{align}
when $\alpha \leq 0.5$.

\subsubsection{Case D} \label{SubSecCaseD}
In this case, when $\alpha > 0.5$, $V_1$ is able to decode $F_1$ successfully with probability
\begin{align}
& \psvoned = \pr\left\{\frac{\alpha P g_1^2}{(1-\alpha) P g_1^2+ \sigma_1^2} > \gamma_1\right\},
\end{align}
while $V_2$ has a success probability of
\begin{align}
& \psvtwod = \pr\left\{\frac{\alpha P g_2^2}{(1-\alpha) P g_2^2 + \sigma_2^2} > \gamma_1\right\} \nonumber \\
& ~~~~~~~~~~~~~~~~ \times \pr\left\{\frac{(1-\alpha) P g_2^2}{\sigma_2^2} > \gamma_2\right\}.
\end{align}
Finally, when $\alpha \leq 0.5$, $V_2$ decodes $F_2$ with probability
\begin{align}
& \psvtwod = \pr\left\{\frac{(1-\alpha) P g_2^2}{\alpha P g_2^2 + \sigma_2^2} > \gamma_2\right\},
\end{align}
while the probability of success at $V_1$ is expressed  as
\begin{align}
& \psvoned = \pr\left\{\frac{(1-\alpha) P g_1^2}{\alpha P g_1^2 + \sigma_1^2} > \gamma_2\right\} \pr\left\{\frac{\alpha P g_1^2}{\sigma_1^2} > \gamma_1\right\}.
\end{align}

\subsection{Power Allocation} \label{SubSecPowerAlloc}

The optimal power allocation problems in all the considered cases is described below, recalling  that in each case, the probability of successful decoding of $F_1$ and $F_2$ is the product of individual success probabilities.

\subsubsection{Case A} \label{SubSecPACaseA}
As seen earlier for case A, the probability of successful decoding is independent of $\alpha$. Therefore, the optimization problem for case A would be
\begin{align}
&\opa: \max~ \psvonea \psvtwoa \nonumber \\
&\phantom{opa: }~~~ \text{s.t.}~~0 \leq \alpha \leq 1. \label{OptProbCaseA}
\end{align}

Next, we prove that the cost function in \eqref{OptProbCaseA} is concave in $0 \leq \alpha \leq 1$, through the following proposition. The concavity of the probability of success and the details corresponding to the other cases lead to lengthy expressions, and are omitted for brevity.

\begin{prop} \label{ThmMain}
The function $\psa \triangleq \psvonea \psvtwoa$ is concave in $0 \leq \alpha \leq 1$.
\end{prop}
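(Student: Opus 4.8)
The plan is to reduce the claim to a one-dimensional calculus statement about a product of two functions of $\alpha$ and then verify the relevant signs of derivatives. First I would write the two factors explicitly in terms of the CDFs given in Section~\ref{SecSysModel}. Since $\psvonea = \pr\{\alpha P g_1^2/\sigma_1^2 > \gamma_1\} = 1 - F_{g_1^2}(\gamma_1 \sigma_1^2/(\alpha P))$ and $\psvtwoa = 1 - F_{g_2^2}(\gamma_2 \sigma_2^2/((1-\alpha) P))$, I would introduce the shorthand $u(\alpha) \triangleq \psvonea$ and $v(\alpha) \triangleq \psvtwoa$, so that $\psa = u v$. The key structural observations are: (i) $u$ is a non-decreasing function of $\alpha$ on $[0,1]$ (more power to $V_1$ never hurts its own decoding), and $v$ is non-increasing in $\alpha$; (ii) as $\alpha$ ranges over $(0,1)$ the argument of $F_{g_1^2}$ decreases, so by the monotonicity and shape of the double-Gamma CDF, $u$ is in fact concave in $\alpha$, and symmetrically $v$ is concave in $\alpha$ (as a function of $1-\alpha$ composed with an affine map, concavity is preserved).

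The main work is establishing that each single-user success probability is itself concave in the relevant power fraction. For this I would compute $u'(\alpha)$ and $u''(\alpha)$ via the chain rule, using $f_{g_1^2}$ from \eqref{CDFForV1} as the derivative of $F_{g_1^2}$. Writing $t = \gamma_1\sigma_1^2/(\alpha P)$, we have $u(\alpha) = 1 - F_{g_1^2}(t)$, $u'(\alpha) = f_{g_1^2}(t)\,\gamma_1\sigma_1^2/(\alpha^2 P)$, and $u''(\alpha)$ will be a combination of $f_{g_1^2}'(t)$ and $f_{g_1^2}(t)$ terms with explicit $\alpha$-powers. The crux is to show $u''(\alpha)\le 0$, i.e. that $2 f_{g_1^2}(t) + t f_{g_1^2}'(t) \ge 0$ after collecting terms (the precise inequality to be pinned down from the computation); this is where properties of the modified Bessel function $\mathcal{K}_{m_1-m_2}$ and the known monotonicity/log-convexity behaviour of the double-Gamma density enter. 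I would handle this either by the Meijer-$G$ representation of $F_{g_1^2}$ and known derivative/recurrence identities, or by the substitution reducing everything to a statement about $g\mapsto 1-F_{g_1^2}(c/g)$ being concave in $g>0$, which can be argued from the double-Gamma being a scale mixture of Gamma distributions (each $1-F_{\mathrm{Gamma}}(c/g)$ being concave in $g$, and concavity surviving the mixture). The same argument applied to $v$ gives its concavity.

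Once both factors are nonnegative, concave, with $u$ monotone nondecreasing and $v$ monotone nonincreasing, I would finish with the elementary lemma that the product of two nonnegative concave functions on an interval, one nondecreasing and the other nonincreasing, is concave: differentiating twice gives $(uv)'' = u''v + 2u'v' + uv''$, where $u''v\le 0$, $uv''\le 0$ since $u,v\ge 0$ and $u'',v''\le 0$, and $2u'v'\le 0$ because $u'\ge 0\ge v'$. Hence $(uv)''\le 0$ on $[0,1]$, establishing concavity of $\psa$.

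The hard part will be the second bullet — rigorously proving that a single-user outage-complement probability is concave in its power fraction under double-Nakagami fading. The scale-mixture argument sidesteps the messy Bessel/Meijer-$G$ differentiation and is likely the cleanest route, but one must be careful that $1-F_{\mathrm{Gamma}(k,\theta)}(c/g)$ is genuinely concave in $g$ over the whole range $g\in(0,\infty)$ for all relevant shape parameters $k = m_1, m_2$ (or their effective values); a quick check shows this holds for $k\le 1$ but may require an argument, or a restriction on the Nakagami parameters, when $k>1$. If full concavity fails pointwise for large shape parameters, the fallback is to appeal to the regime of interest (small SINR thresholds $\gamma_i$, i.e. the operating region where the argument $c/g$ is small) or to verify the product inequality $(uv)''\le 0$ directly without needing each factor separately concave.
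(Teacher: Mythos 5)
Your route is genuinely different from the paper's: the paper differentiates the Meijer-$G$/Bessel-$\mathcal{K}$ forms of the two CDFs directly and argues the sign of the second derivative of the product from Bessel identities, whereas you factor the claim into (i) individual concavity of $u(\alpha)\triangleq\psvonea$ and $v(\alpha)\triangleq\psvtwoa$ plus (ii) the elementary lemma that a product of nonnegative concave functions with opposite monotonicity is concave. Step (ii) is correct and is a cleaner finish than the paper's computation. The gap is step (i), and it is not a technicality you can patch by more careful Bessel analysis: $u(\alpha)=1-F_{g_1^2}\bigl(\gamma_1\sigma_1^2/(\alpha P)\bigr)$ cannot be concave on all of $[0,1]$. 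As $\alpha\to0^+$ the CDF argument tends to infinity and the double-Gamma tail decays like $t^{p}e^{-2\sqrt{\mu t}}$, so both $u(\alpha)\to0$ and $u'(\alpha)=f_{g_1^2}(t)\,\gamma_1\sigma_1^2/(\alpha^2P)\to0$; a function vanishing together with its derivative at the left endpoint while positive in the interior must be convex near that endpoint. Your scale-mixture fallback does not rescue this: for a Gamma$(k,\theta)$ component the relevant quantity is $t f'(t)+2f(t)=f(t)\bigl(k+1-t/\theta\bigr)$, negative for $t>(k+1)\theta$, so $g\mapsto 1-F_{\mathrm{Gamma}}(c/g)$ is convex near $g=0$ for \emph{every} shape parameter; your ``quick check shows this holds for $k\le1$'' is incorrect. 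The symmetric obstruction afflicts $v$ at $\alpha\to1^-$.

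Consequently your decomposition can at best establish concavity on an interior subinterval where both CDF arguments sit below their respective thresholds $(\monei+\mtwoi+1)$-type values, i.e., in the operating regime of moderate $\gamma_i\sigma_i^2/P$ — which is the fallback you name but do not carry out. You are in good company: since $\psa(0)=\psa(1)=0$ with vanishing one-sided derivatives and $\psa>0$ in the interior, literal concavity on the closed interval is impossible, and the paper's own appendix glosses over exactly this point when it asserts the second derivative is everywhere negative. But as a standalone argument your proposal does not close: either restrict the domain explicitly and prove the single-factor concavity there, or verify $(uv)''\le0$ directly on the restricted domain; the unrestricted statement is out of reach for any method.
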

\begin{proof}
See Appendix.
\end{proof}
A similar approach can be employed to establish the concavity of the cost function in the other optimization problems.

\subsubsection{Case B} \label{SubSecPACaseB}
For case B, recall that two sub-cases occur. For the case with $\alpha > 0.5$,  the requirement from $\psvtwob$, namely,
\begin{align}
\frac{\alpha P g_2^2}{(1-\alpha) P g_2^2 + \sigma_2^2} > \gamma_1  
\end{align}
i.e.
\begin{align}
\frac{\sigma_1^2}{\alpha P \left[\frac{1}{\gamma_1}-1\right] - P} > 0
\end{align}
yields the condition $\alpha > \frac{\gamma_1}{1+\gamma_1}$, and $\psvtwob$ also results in the requirement that $\alpha \leq 1$. Since $\gamma_2 > 0$, we obtain the following optimization problem: 
\begin{align}
&\opbone: \max~~~~ \psvoneb \psvtwob \nonumber \\
&\phantom{\opbone: }~~~\text{s.t.}~~\frac{\gamma_1}{1+\gamma_1} \leq \alpha \leq 1,
\end{align}
Similarly, when $\alpha \leq 0.5$, the condition
\begin{align}
\frac{(1-\alpha) P g_2^2}{\alpha P g_2^2 + \sigma_2^2} > \gamma_2 
\end{align}
i.e.
\begin{align}
\frac{\sigma_2^2}{\frac{P}{\gamma_2^2}-\alpha P\left(\frac{1}{\gamma_2}+1\right)} > 0
\end{align}
dictates that $\alpha \leq \frac{1}{1+\gamma_2}$. In this case, the optimization problem becomes
\begin{align}
&\opbtwo: \max~~~ \psvoneb \psvtwob \nonumber \\
&\phantom{\opbtwo: }~~~\text{s.t.}~~0 \leq \alpha \leq \frac{1}{1+\gamma_2}.
\end{align}
with the requirement $\alpha > 0$ from $\psvoneb$.

\subsubsection{Case C} \label{SubSecPACaseC}
The individual optimization problems corresponding to $\alpha > 0.5$ and $\alpha \leq 0.5$ are respectively given by
\begin{align}
&\opcone: \max~~~ \psvonec \psvtwoc \nonumber \\
&\phantom{\opcone: }~~~\text{s.t.}~~\frac{\gamma_1}{1+\gamma_1} \leq \alpha \leq 1, \\
&\opctwo: \max~~~~ \psvonec \psvtwoc \nonumber \\
&\phantom{\opctwo: }~~~\text{s.t.}~~0 \leq \alpha \leq \frac{1}{1+\gamma_2}.
\end{align}

\subsubsection{Case D} \label{SubSecPACaseD}
Finally, the optimization problem for case D for $\alpha > 0.5$ and $\alpha \leq 0.5$ can be formulated  as
\begin{align}
&\opdone: \max~~~ \psvoned \psvtwod \nonumber \\
&\phantom{\opdone: }~~~\text{s.t.}~~\frac{\gamma_1}{1+\gamma_1} \leq \alpha \leq 1, \\
&\opdtwo: \max~~~~ \psvoned \psvtwod \nonumber \\
&\phantom{\opdtwo: }~~~\text{s.t.}~~0 \leq \alpha \leq \frac{1}{1+\gamma_2}.
\end{align}

Next, we extend the above framework to the case where the individual files $F_1$ and $F_2$ are split into two parts, and each part is transmitted simultaneously.

\section{Split File Caching: Optimal Power Allocation} \label{SecSplitFile}
In this section, we consider the case where the files $F_1$ and $F_2$ are split further into sub-files. 
Without loss of generality, we consider the simplest case where each file is split into two sub-files. 
This  analysis can then be readily  extended to the case of splitting a file to greater than two sub-files. As shown in Fig.~\ref{FigNOMAv2vPCS}, the sub-files of $F_1$ and $F_2$ are denoted by $\foneone$, $\fonetwo$, and $\ftwoone$, $\ftwotwo$, respectively. It is assumed that the BS needs $\beta$ fraction of the assigned powers to transmit $F_{\ell}^{(1)}$, $\ell=1, 2$ and $1-\beta$ fraction of the powers to transmit $F_{\ell}^{(2)}$, $\ell=1, 2$, respectively. Moreover, let the signals corresponding to $\foneone$, $\fonetwo$, $\ftwoone$, and $\ftwotwo$ be denoted by $\xoneone$, $\xonetwo$, $\xtwoone$, and $\xtwotwo$, respectively. 
Based on this, the superimposed signal transmitted by the BS is represented as 
\begin{align}
& x = \sqrt{\beta \alpha P} \xoneone + \sqrt{(1 \hspace{-0.1cm} - \hspace{-0.1cm} \beta) \alpha P} \xonetwo \nonumber \\
& ~~~~~~~~~~ + \sqrt{\beta (1 \hspace{-0.1cm} - \hspace{-0.1cm} \alpha) P} \xtwoone + \sqrt{(1 \hspace{-0.1cm} - \hspace{-0.1cm} \beta) (1 \hspace{-0.1cm} - \hspace{-0.1cm} \alpha) P} \xtwotwo.
\end{align}
Let the SINR constraint for decoding the files $\foneone$, $\fonetwo$, $\ftwoone$, and $\ftwotwo$ be given by $\goneone$, $\gonetwo$, $\gtwoone$, and $\gtwotwo$, respectively. Depending on the file portions cached by $V_1$ and $V_2$, several classes and sub-cases are conceivable. An example of one such class includes the following sub-cases:
\begin{enumerate}[(a)]
 \item $V_1$ has cached $\ftwoone$ and $V_2$ has cached $\foneone$.
 \item $V_1$ has not cached any portions of $F_2$ and $V_2$ has cached $\foneone$.
 \item $V_1$ has cached $\ftwoone$ and $V_2$ has not cached any portions of $F_1$.
 \item Neither $V_1$ has cached any portions of $F_2$, nor $V_2$ has cached any portions of $F_1$.
\end{enumerate}

In the following, we restrict our attention to the first scenario, which is illustrated in Fig.~\ref{FigNOMAv2vPCS}, among such several other cases which can be defined accordingly \cite{Ding_TCOM_2018}. Our framework can also be similarly extended to the other cases. For the case at hand, the signals received at $V_1$ and $V_2$, after cancellation of pre-cached signals are given by
\begin{align}
& y_1 = h_1 (\sqrt{\beta \alpha P} \xoneone + \sqrt{(1 \hspace{-0.1cm} - \hspace{-0.1cm} \beta) \alpha P} \xonetwo \nonumber \\
& ~~~~~~~~~~ + \sqrt{(1 \hspace{-0.1cm} - \hspace{-0.1cm} \beta) (1 \hspace{-0.1cm} - \hspace{-0.1cm} \alpha) P} \xtwotwo) + w_1 
\end{align}
and
\begin{align}
& y_2 = h_2 (\sqrt{(1 \hspace{-0.1cm} - \hspace{-0.1cm} \beta) \alpha P} \xonetwo + \sqrt{\beta (1 \hspace{-0.1cm} - \hspace{-0.1cm} \alpha) P} \xtwoone \nonumber \\
& ~~~~~~~~~~ + \sqrt{(1 \hspace{-0.1cm} - \hspace{-0.1cm} \beta) (1 \hspace{-0.1cm} - \hspace{-0.1cm} \alpha) P} \xtwotwo) + w_2. 
\end{align}

\begin{figure}[ht]
\begin{center}
 \includegraphics[scale=0.7]{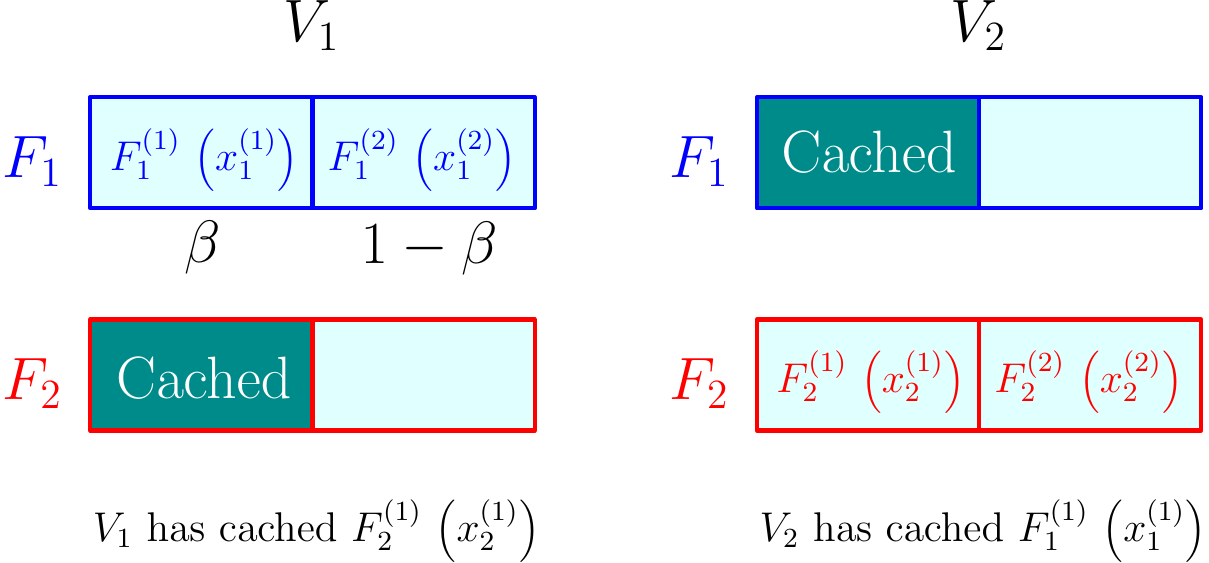}
 \caption{File split-up model for transmission and caching. The right hand side of the figure shows the case where $\ftwoone \in \mathrm{F}_{V_1}$ and $\foneone \in \mathrm{F}_{V_2}$.} \label{FigNOMAv2vPCS}
\end{center}
\end{figure}

For the above case where $V_1$ has cached $\ftwoone$ and $V_2$ has cached $\foneone$, that is, $\ftwoone \in \mathrm{F}_{V_1}$, and $\foneone \in \mathrm{F}_{V_2}$, two sub-cases arise depending on whether $\alpha \leq 0.5$ or $\alpha > 0.5$. For the case when $\alpha > 0.5$, the far user $V_1$ will  first decode $\foneone$, with probability
\begin{align}
\psvoneaone = \pr \left\{ \frac{\alpha \beta P g_1^2}{(1-\beta) P g_1^2 + \sigma_1^2} > \goneone  \right\},
\end{align}
and will decode $\fonetwo$ later, with probability
\begin{align}
\psvoneatwo = \pr \left\{ \frac{\alpha (1-\beta) P g_1^2}{(1-\alpha) (1-\beta) P g_1^2 + \sigma_1^2} > \gonetwo  \right\}.
\end{align}
On the contrary, $V_2$, which is the near user, employs SIC and first decodes $\fonetwo$ with probability
\begin{align}
\psvtwoaone = \pr \left\{ \frac{\alpha (1-\beta) P g_2^2}{(1-\alpha) P g_2^2 + \sigma_2^2} > \gonetwo\right\},
\end{align}
then decodes $\ftwoone$ with probability
\begin{align}
\psvtwoatwo = \pr \left\{ \frac{\beta (1-\alpha) P g_2^2}{(1-\alpha)(1-\beta) P g_2^2 + \sigma_2^2} > \gtwoone \right\},
\end{align}
followed by decoding $\ftwotwo$ with probability
\begin{align}
\psvtwoathree = \pr \left\{ \frac{(1-\beta) (1-\alpha) P g_2^2}{\sigma_2^2} > \gtwotwo \right\}.
\end{align}

Similarly, for the case when $\alpha \leq 0.5$, the far user $V_2$ will decode $\ftwoone$ first, with probability
\begin{align}
\psvtwoaone = \pr \left\{ \frac{(1-\alpha) \beta P g_2^2}{(1-\beta) P g_2^2 + \sigma_2^2} > \gtwoone  \right\},
\end{align}
followed by $\ftwotwo$ with probability
\begin{align}
\psvtwoatwo = \pr \left\{ \frac{(1-\alpha) (1-\beta) P g_2^2}{\alpha (1-\beta) P g_2^2 + \sigma_2^2} > \gtwotwo  \right\}.
\end{align}
Lastly, the near user $V_1$ employs SIC to decode $\ftwotwo$ first with probability
\begin{align}
\psvoneaone = \pr \left\{ \frac{(1-\alpha) (1-\beta) P g_1^2}{\alpha P g_1^2 + \sigma_1^2} > \gtwotwo  \right\},
\end{align}
then decodes $\foneone$ with probability
\begin{align}
\psvoneatwo = \pr \left\{ \frac{\alpha \beta P g_1^2}{\alpha (1-\beta) P g_1^2 + \sigma_1^2} > \goneone  \right\},
\end{align}
followed by finally decoding $\fonetwo$ with probability
\begin{align}
\psvoneathree = \pr \left\{ \frac{\alpha (1-\beta) P g_1^2}{\sigma_1^2} > \gonetwo  \right\}.
\end{align}

\subsection{Power Allocation} \label{SubSecPASplit}

Following the above discussion and the framework considered in Sec.~\ref{SecFullFile}, we propose the following optimization problem to jointly design the power allocation factors $\alpha$ and $\beta$, such that the overall probability of successful decoding is maximized. That is, for the case when $\alpha > 0.5$, the optimization problem is formulated as
\begin{align}
&\opaa: \max_{\alpha, \beta}~ \psvoneaone \psvoneatwo \psvtwoaone \psvtwoatwo \psvtwoathree \nonumber \\
&\phantom{opaa: }~ \text{s.t.}~~0.5 \leq \alpha \leq 1, \nonumber \\
&\phantom{opaa: ~ \text{s.t.}}~~0 \leq \beta \leq 1 \label{EqnSplitFile_alphagtrp5}
\end{align}
whereas for the case of $\alpha \leq 0.5$, the optimization problem is formulated as
\begin{align}
&\opaa: \max_{\alpha, \beta}~ \psvoneaone \psvoneatwo \psvoneathree \psvtwoaone \psvtwoatwo \nonumber \\
&\phantom{opaa: }~ \text{s.t.}~~0 \leq \alpha \leq 0.5, \nonumber \\
&\phantom{opaa: ~ \text{s.t.}}~~0 \leq \beta \leq 1. \label{EqnSplitFile_alphalsrp5}
\end{align}

It is not an easy task  to analytically characterize the optimization problems given above, due to the complicated expressions in \eqref{CDFForV1}. Therefore, we resort to numerical techniques, where we observe that the probability of successful decoding is concave with respect to the tuple $(\alpha, \beta)$ in their respective range. The optimal pair $(\alpha^*, \beta^*)$ can be obtained through techniques such as steepest ascent, and other search algorithms. This point is further elaborated in Sec.~\ref{SecRes}.

\begin{table}
\centering
\caption{Parameter Settings}
\label{table}
\setlength{\tabcolsep}{10pt}
\begin{tabular}
	{|p{110pt}|p{50pt}|}
\hline
			\textbf{Parameters}                                                         & \textbf{Settings }                 \\ \hline
		
			$\moneone, \mtwoone$                                       & $1, 1$                          \\
			$\monetwo, \mtwotwo$                                           & $1, 1$                          \\
			$\omoneone, \omtwoone$                        & $2, 2$                         
\\
			$\omonetwo, \omtwotwo$                     & $2, 2$                         
\\
			Path loss exponent $d$                                         & $2$                          \\
			Distances of $V_1$ and $V_2$ from BS                                         & $1, 0.5$                          \\
			Files at BS, $T$                                         & $5$                          \\
			Cache size at $V_1$ and $V_2$                                & $1$
\\
			Noise variance $\sigma_1^2$                                        & $1$                         \\
			Noise variance $\sigma_2^2$ & $1$
\\
			SINR threshold $\gamma_1$  & $1$
\\
			SINR threshold $\gamma_2$                                     & $1$                     \\
			Zipf parameter $\zeta$                                & $0.5$                        \\ \hline
\end{tabular}
\label{tab1}
\end{table}

\section{Numerical Results} \label{SecRes}

In this section, we discuss the performance of the proposed cache-aided NOMA, and compare its performance to conventional NOMA  and cache-aided  orthogonal multiple access (OMA) systems. First, we consider the full file caching scenario studied in Sec.~\ref{SecFullFile}. Unless stated otherwise, the parameters used for our simulations are listed in Tab.~\ref{tab1}. Since the popularity profile is modeled by the Zipf distribution, the optimal caching policy would be to cache the most popular files, depending on the cache size of $V_1$ and $V_2$.

\begin{figure}[ht]
\begin{center}
 \includegraphics[width=9.2cm, height=6.5cm]{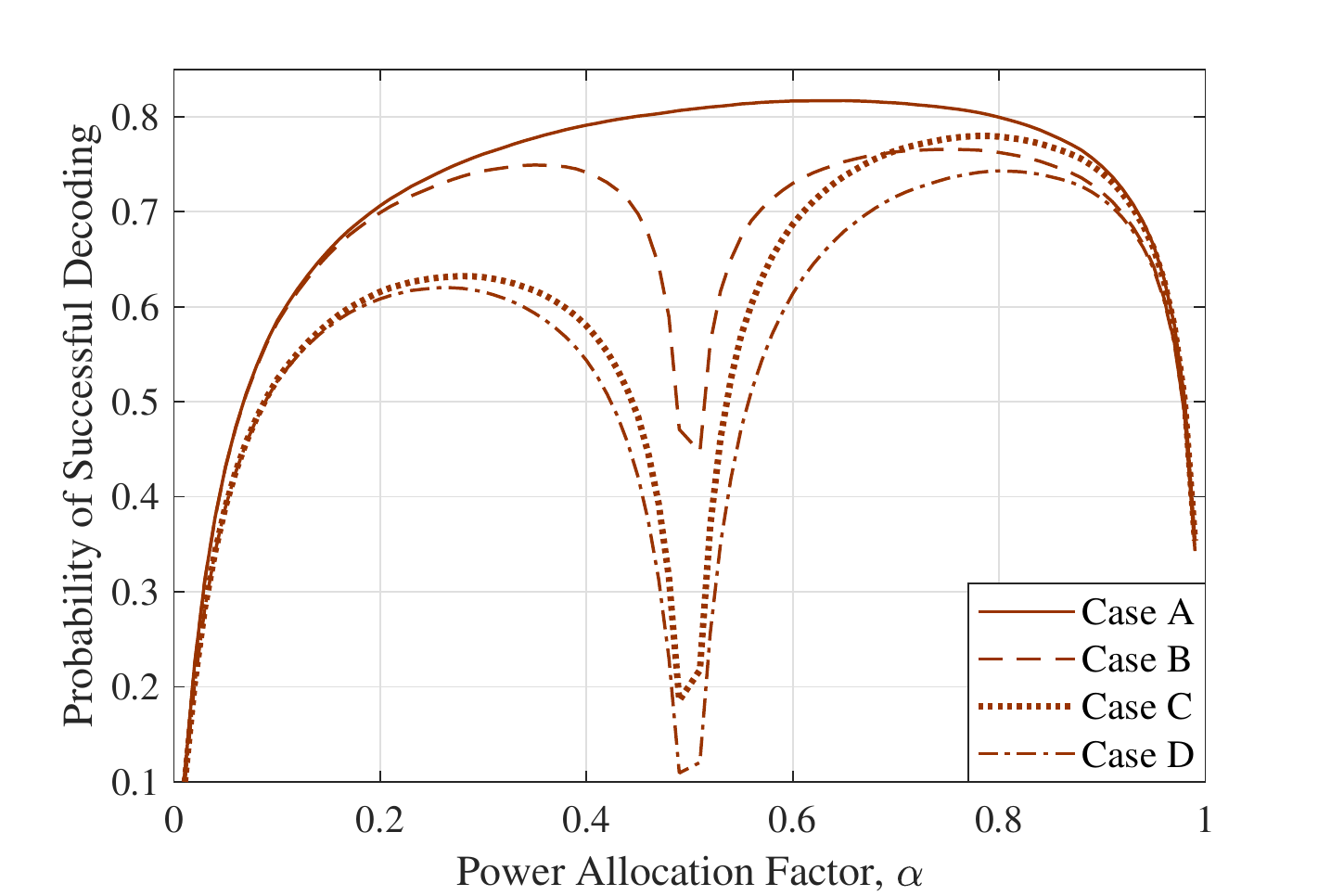}
 \caption{The cost functions in the associated optimization problems are shown to be concave.} \label{FigConcave}
\end{center}
\end{figure}

\begin{figure}[ht]
\begin{center}
 \includegraphics[width=9.2cm, height=6.5cm]{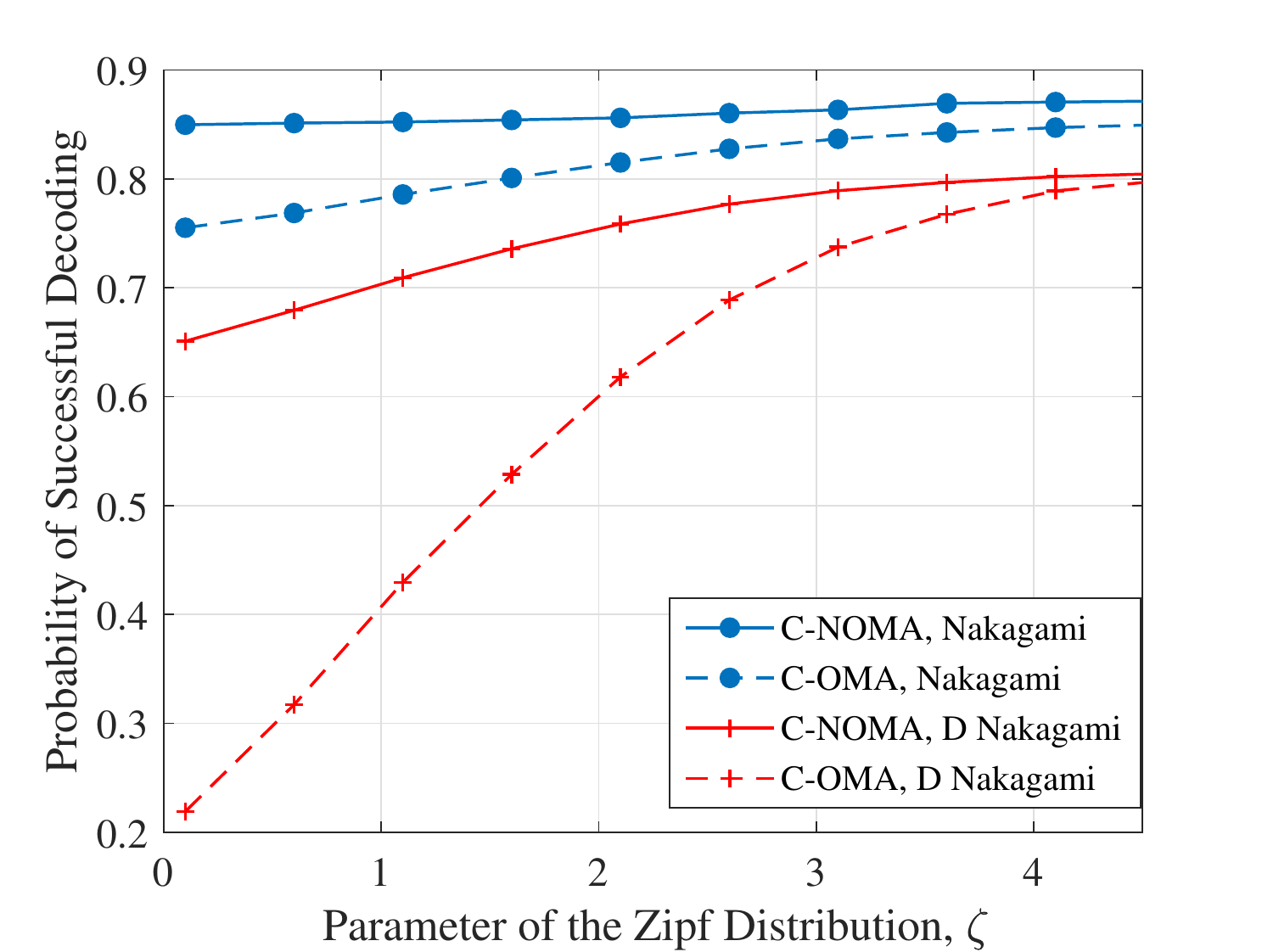}
 \caption{Performance comparison of caching-aided NOMA (denoted by C-NOMA) and caching-aided OMA (denoted by C-OMA), in terms of variation of probability of successful decoding with different values of the Zipf distribution parameter $\zeta$, under Nakagami and double Nakagami distributions.} \label{FigNvsNN_Zeta}
\end{center}
\end{figure}

\begin{figure}[ht]
\begin{center}
 \includegraphics[width=9.2cm, height=6.5cm]{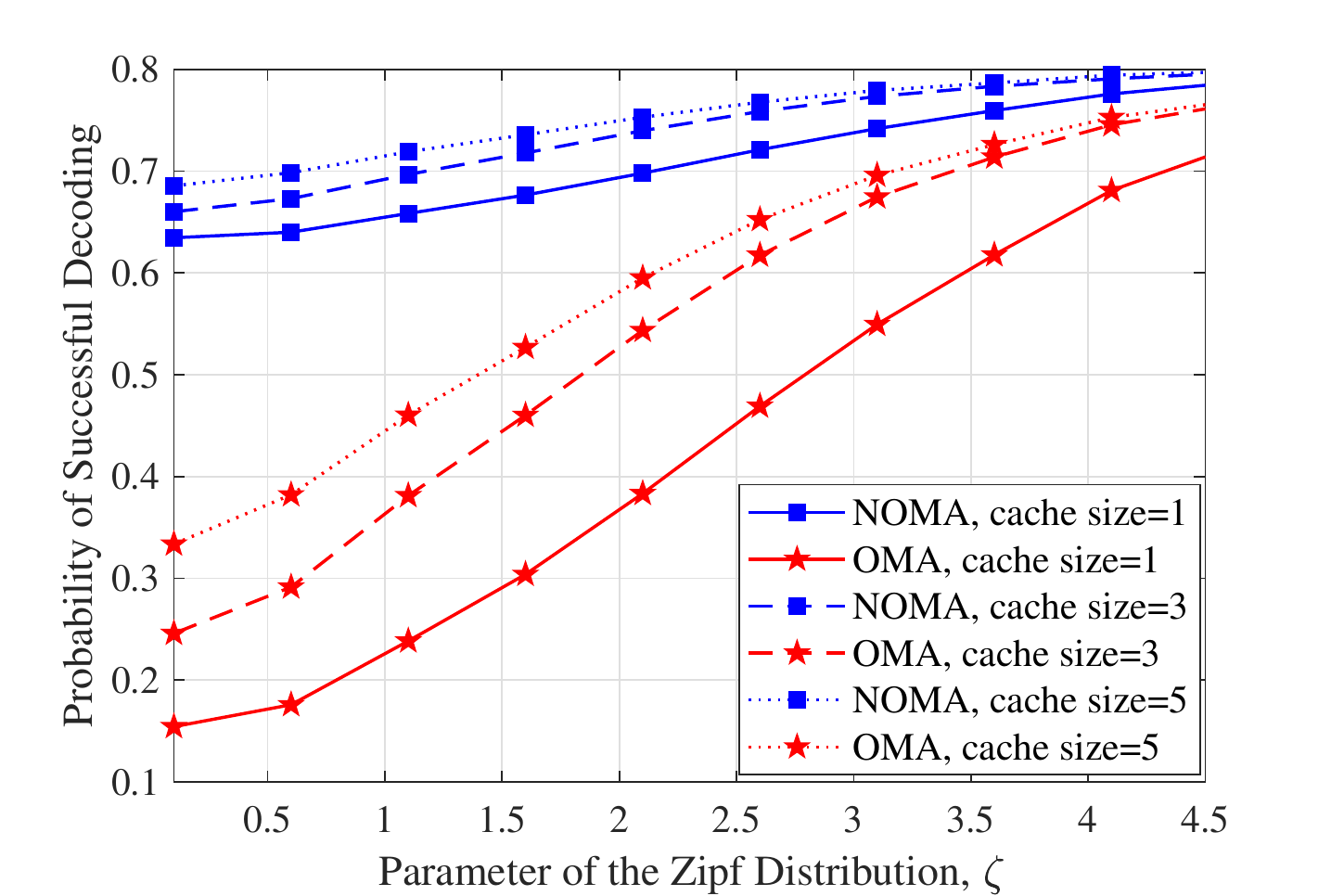}
 \caption{Performance comparison of caching-aided NOMA and caching-aided OMA, in terms of variation of probability of successful decoding with $\zeta$, for different values of cache sizes available at $V_1$ and $V_2$.} \label{FigPsucVszeta}
\end{center}
\end{figure}

\begin{figure}[ht]
\begin{center}
 \includegraphics[width=9.2cm, height=6.5cm]{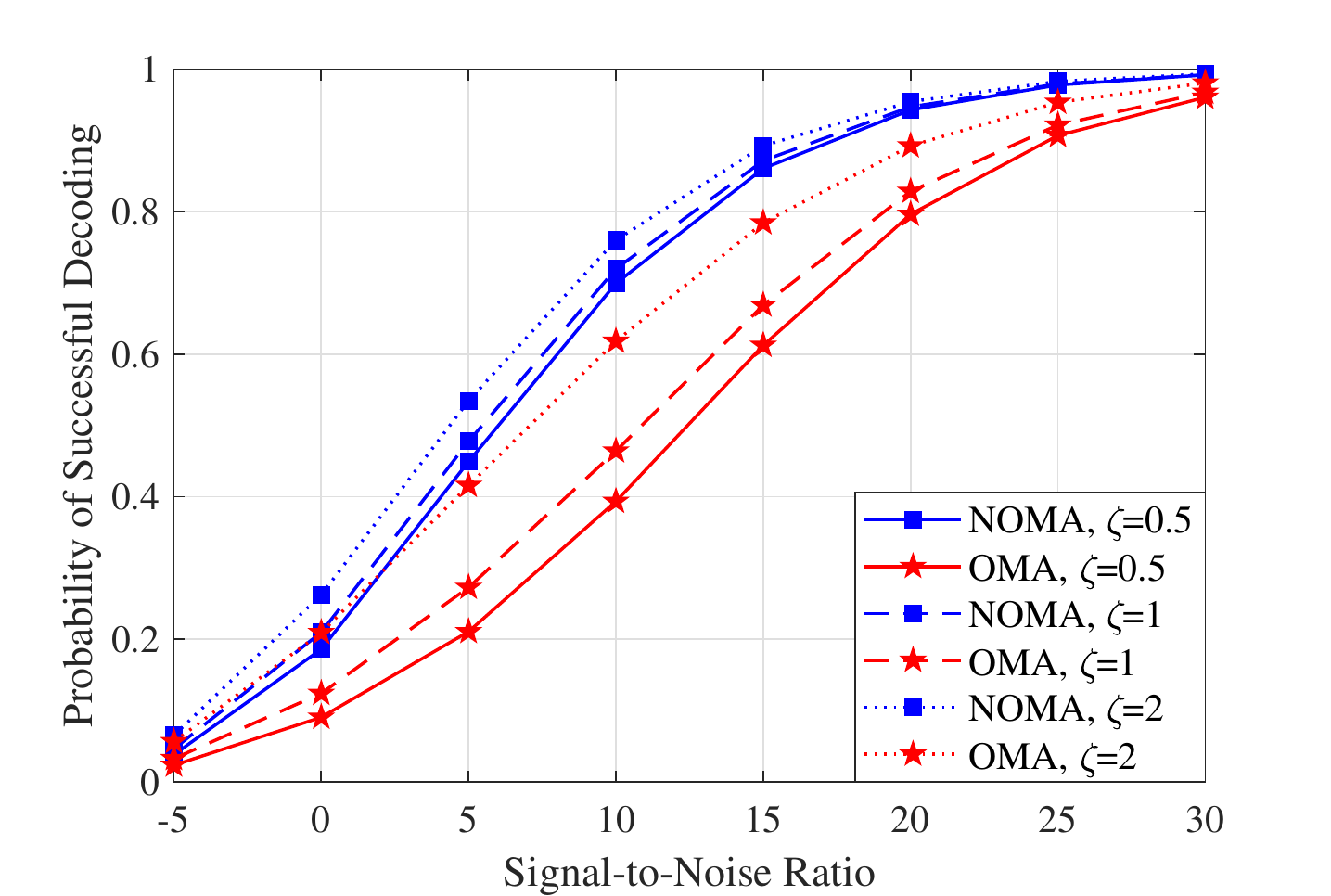}
 \caption{Performance comparison of caching-aided NOMA and caching-aided OMA, in terms of variation of probability of successful decoding with SNR, for different values of the Zipf parameter $\zeta$.} \label{FigPsucVsSNR}
\end{center}
\end{figure}

Fig.~\ref{FigConcave} shows the performance of the probability of successful decoding, i.e., $\psvonea \psvtwoa$ with respect to $\alpha$, for the cases A, B, C and D. It is noticed that for each case, as expected, the cost function is concave in $\alpha$. In particular for case A, as expected from Lemma~\ref{ThmMain}, the probability of successful decoding is concave in the entire range of $\alpha$. The optimal $\alpha$ in each case and sub-cases is found through the bisection method \cite[Alg.~4.1]{Boyd_CVX}.

Fig.~\ref{FigNvsNN_Zeta} compares the performances of cache-aided NOMA and cache-aided OMA systems under Nakagami$-m$ and double Nakagami$-m$ fading conditions, which are considered suitable in vehicle-to-vehicle communications. To this effect, different values of $\zeta$ are considered, with an SNR of $10$ dB. In both cases, NOMA exhibits a better performance compared to the OMA counterpart. 
It is evident though that for both NOMA and OMA, the performance under the double Nakagami$-m$ fading conditions is inferior  compared to that encountered under of conventional Nakagami$-m$ fading, which is expected due to the cascaded nature of the double Nakagami$-m$ model.  However, the magnitude of degradation in OMA is severe compared to NOMA, especially in the regime where $\zeta \rightarrow 0$, which is of practical relevance. Therefore, the assumption of Nakagami$-m$  fading model yields an unrealistic upper bound on the performances of cache-aided NOMA and cache-aided OMA, which is over-optimistic in the latter case.

Fig.~\ref{FigPsucVszeta} compares the performances of cache-aided NOMA and cache-aided OMA systems for different values of $\zeta$ and cache sizes, available at $V_1$ and $V_2$. The received SNR is fixed at $10$ dB and it  is observed that  a large cache at $V_1$ and $V_2$ improves the performance of NOMA and OMA, since the probability of caching the files requested by another vehicle increases with the cache size. Once again, for lower values of $\zeta$, it is observed that the NOMA system offers significant performance gain compared to the OMA counterpart, while it also exhibits a slower performance degradation with $\zeta$.

\begin{figure}[ht]
\begin{center}
 \includegraphics[width=9.2cm, height=6.5cm]{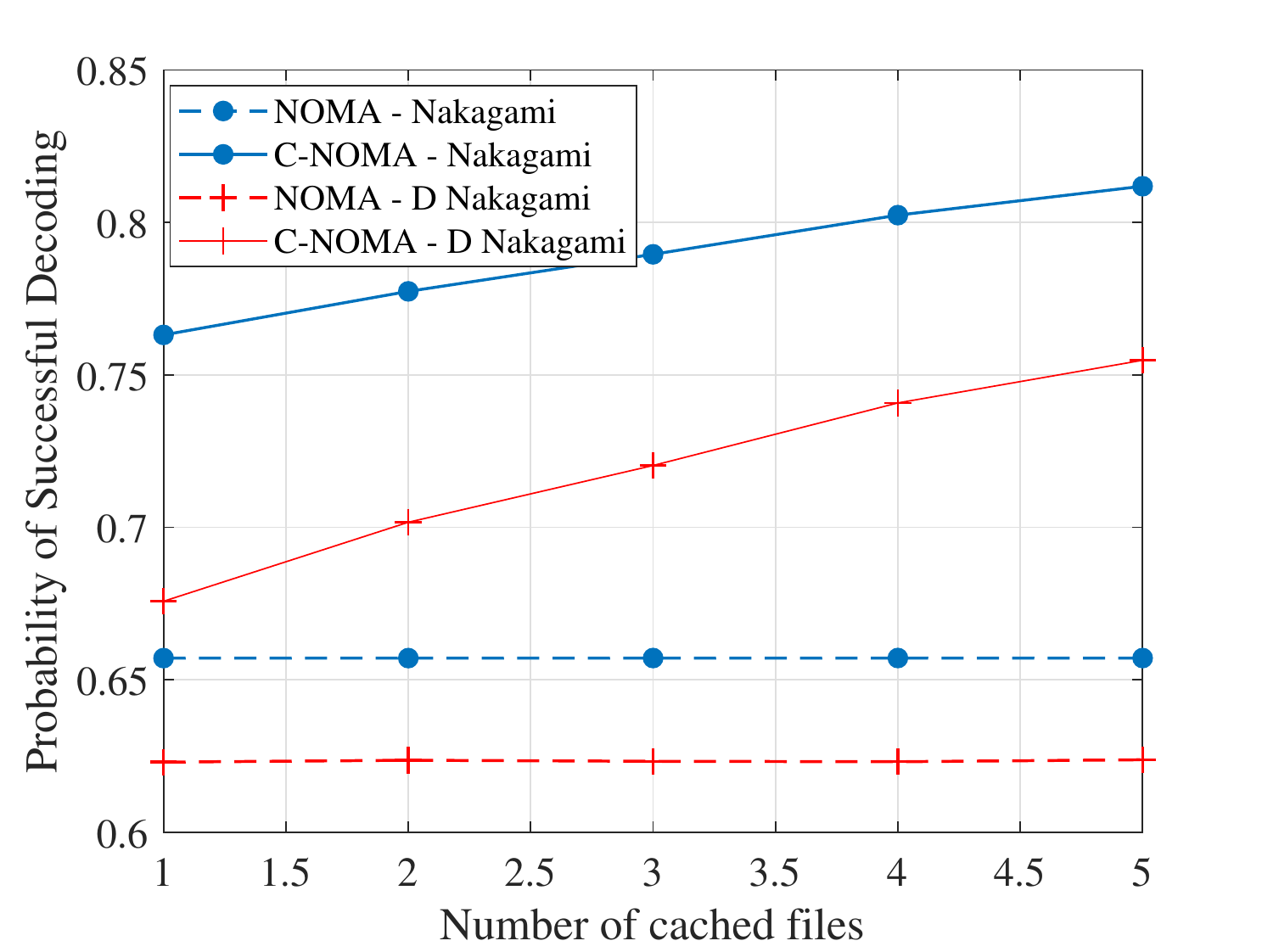}
 \caption{Performance comparison of caching-aided NOMA, in terms of variation of probability of successful decoding for different number of cached files, under Nakagami$-m$ and double Nakagami$-m$ distributions.} \label{FigNvsNN_CachedFiles}
\end{center}
\end{figure}

\begin{figure}[ht]
\begin{center}
 \includegraphics[width=9.2cm, height=6.5cm]{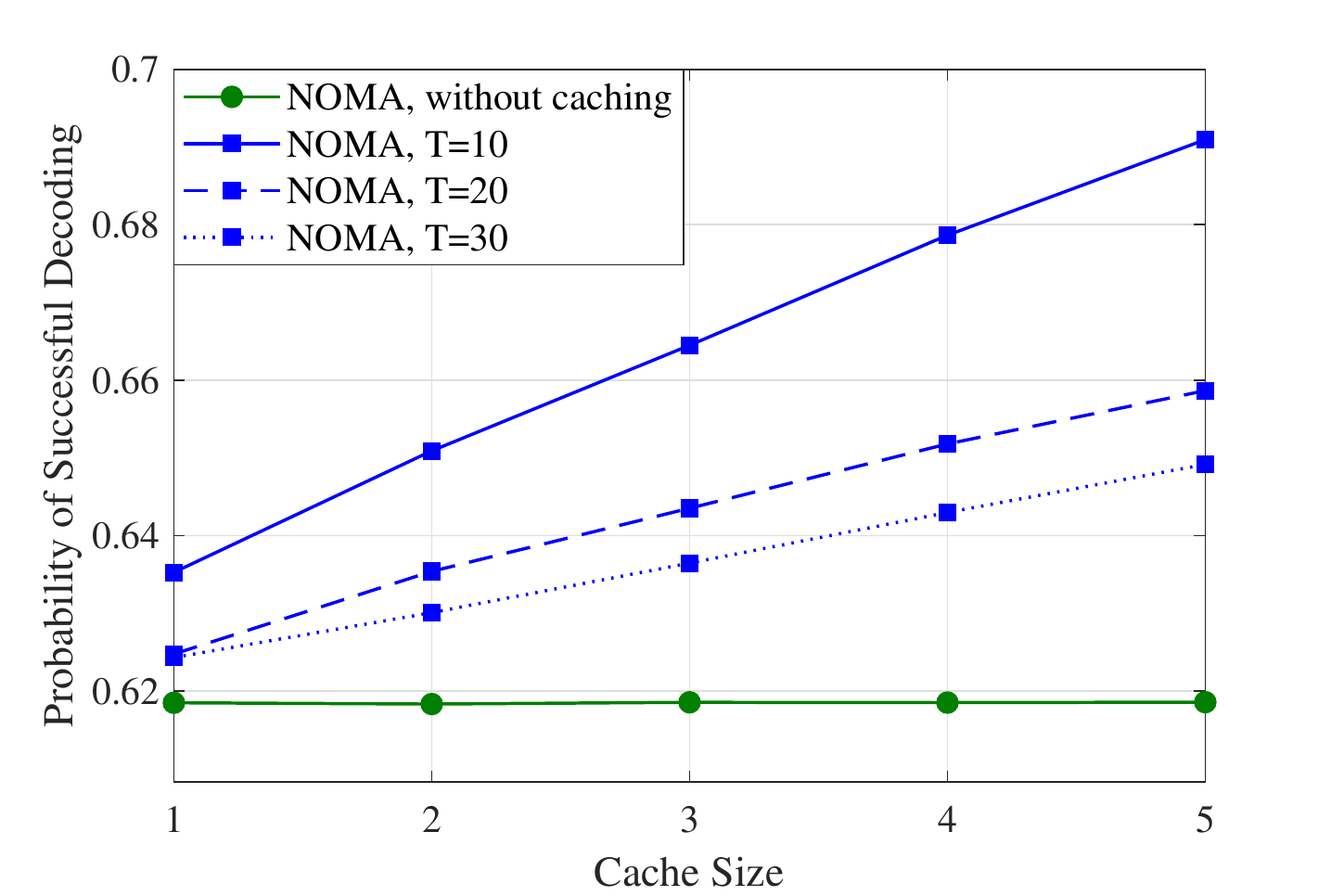}
 \caption{Performance comparison of caching-aided NOMA and conventional NOMA, in terms of variation of probability of successful decoding with different cache sizes at each vehicle, for different number of files available at the BS.} \label{FigPsucVsMaxCache}
\end{center}
\end{figure}

\begin{figure}[ht]
\begin{center}
 \includegraphics[width=9.2cm, height=6.5cm]{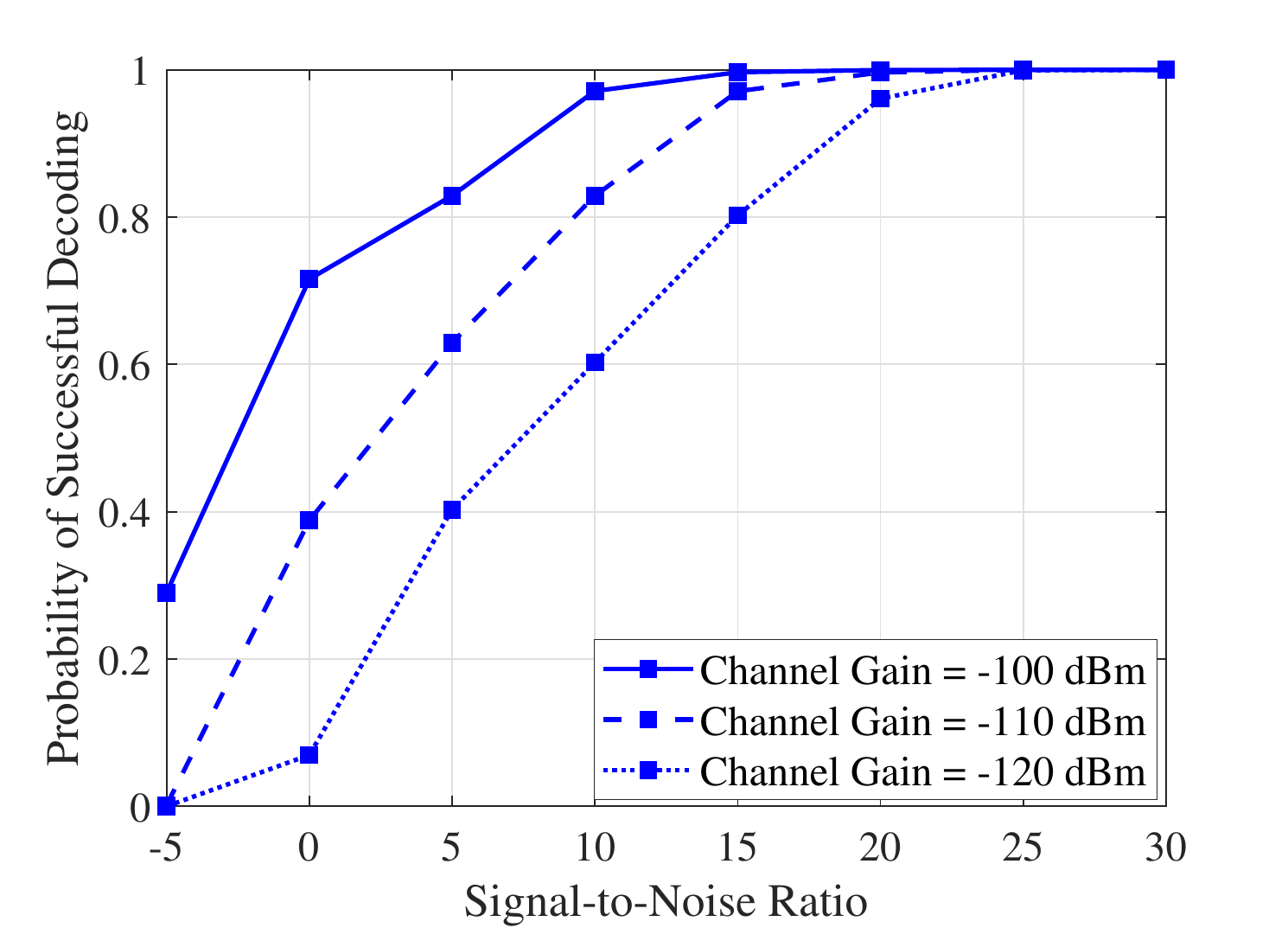}
 \caption{Performance comparison of caching-aided NOMA under spatio-temporally correlated Rician distribution \cite{Ozdogan_TWC_2019}, in terms of variation of probability of successful decoding with SNR, for different values of the channel gain at the antenna of each vehicle.} \label{FigPsucVsSNR_STCRice}
\end{center}
\end{figure}

Fig.~\ref{FigPsucVsSNR} depicts the performance of proposed scheme and compare it with  cache-aided OMA for different Zipf parameter $\zeta$ values. As expected, the proposed cache-aided NOMA offers a better performance for all $\zeta$ values. Note that the marginal performance improvement also matches with the behavior reported in \cite{Doan_ICC_2018}. Moreover, since a higher value of $\zeta$ makes the popularity profile more skewed towards the first file, the performances of both NOMA and OMA increase with an increase in $\zeta$. This is because the probability of occurrence of case A increases with $\zeta$, which dominates the average performance. The performance improvement offered due to cache-aided NOMA is significant for low $\zeta$, in comparison to OMA.

\begin{figure}[ht]
\begin{center}
 \includegraphics[width=9.2cm, height=6.5cm]{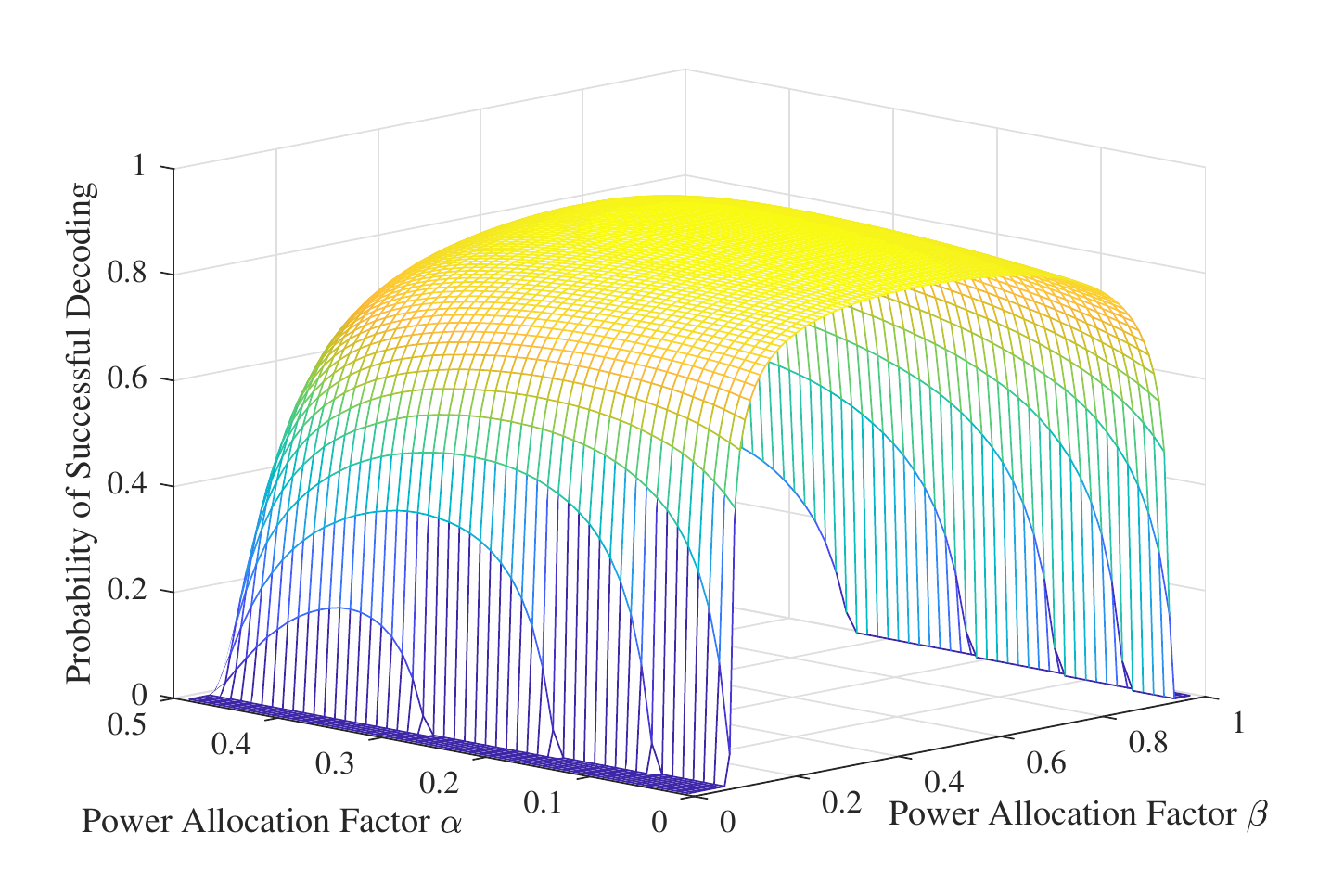}
 \caption{Performance comparison of split file caching-aided NOMA for the cost function given in \eqref{EqnSplitFile_alphalsrp5} with SNR, for different values of power allocation factors $\alpha$ and $\beta$.} \label{FigSplitPsucVsAlpBet}
\end{center}
\end{figure}

Fig.~\ref{FigNvsNN_CachedFiles} compares the performances of cache-aided NOMA and cache-aided OMA, for different number of cached files at $V_1$ and $V_2$. Once again, the achieved  performance   by NOMA outperforms that of OMA and, as expected, increases with an increase in the number of cached files. Also, the achieved performance  due to the double Nakagami$-m$ is outperformed by that of of the conventional Nakagami fading$-m$, as expected. Similarly, Fig. \ref{FigPsucVsMaxCache} shows the impact of cancellation of interference due to caching on the proposed NOMA system. It is seen that in comparison with the conventional NOMA system, the proposed cache-aided NOMA system offers a better performance as the cache size at $V_1$ and $V_2$ increases. Moreover, the performance also improves for lower values of the total files available at the BS, $T$. This is  expected, since a lesser $T$ and a larger cache size improves the probability of caching the requested file in the caching phase. Moreover, when the cache size at $V_1$ and $V_2$ is zero, that is, when $V_1$ and $V_2$ do not cache any files in the caching phase, the performances of conventional NOMA and the caching-aided NOMA are equal.

In Fig.~\ref{FigPsucVsSNR_STCRice}, we investigate the performance of the caching-aided NOMA in a vehicular network with high mobility, where the statistics of the channel is time-varying and has spatial correlation.  Among several spatio-temporally correlated fading models available in the literature \cite{Chen_JSAC_2017}, \cite{Ozdogan_TWC_2019}, \cite{Avazov_AWPL_2017}, we consider the spatio-temporally correlated Rician model, discussed in \cite{Ozdogan_TWC_2019}. In particular, we assume single antenna on the BS and each vehicle, with multiple scatters and the covariance matrix of the observations from each clutter modeled by the approximate Gaussian local scattering model \cite{Ozdogan_TWC_2019}. The model also includes the impairments due to the large scale fading. The Rician parameter for each vehicle is set to unity. The distances between each vehicle and the BS are chosen such that the channel gain is set to a desired value. As shown in Fig.~\ref{FigPsucVsSNR_STCRice}, the probability of successful decoding increases with an increase in channel gain, as expected. In addition to the above result, it is observed that the probability of successful decoding depends on the value Rician parameter, and the degree of correlation over time and space. A detailed study on the impact of time-varying nature of the channel on the performance of the cache-aided NOMA is reserved as a part of the future work.

\begin{figure}[ht]
\begin{center}
 \includegraphics[width=9.2cm, height=6.5cm]{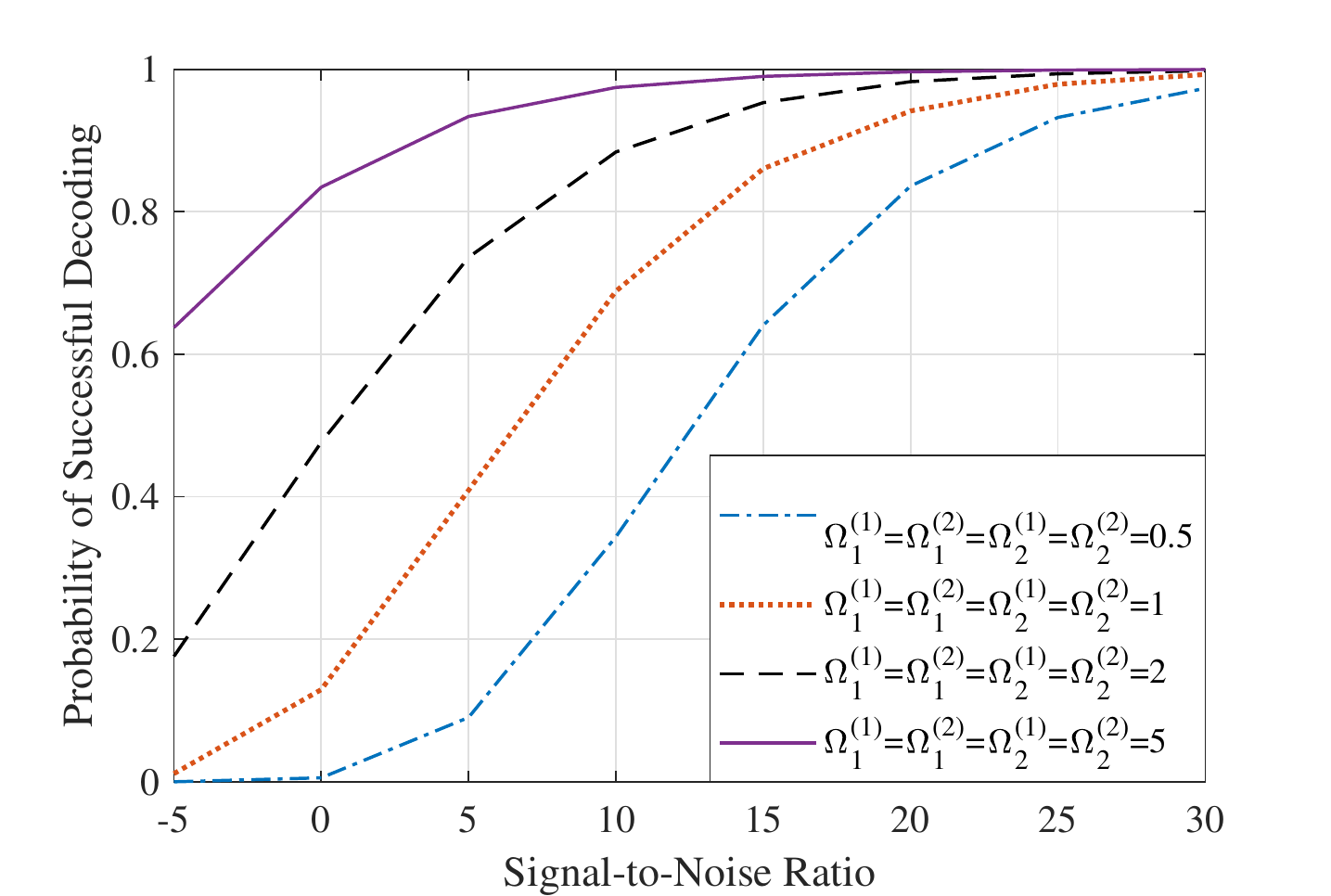}
 \caption{Performance comparison of split file caching-aided NOMA for the cost function given in \eqref{EqnSplitFile_alphalsrp5} with SNR, for different values of the parameters $\omoneone=\omonetwo=\omtwoone=\omtwotwo$, and $\moneone=\monetwo=\mtwoone=\mtwotwo=1$.} \label{FigSplitPsucVsSNRwithOmega}
\end{center}
\end{figure}

\begin{figure}[ht]
\begin{center}
 \includegraphics[width=9.2cm, height=6.5cm]{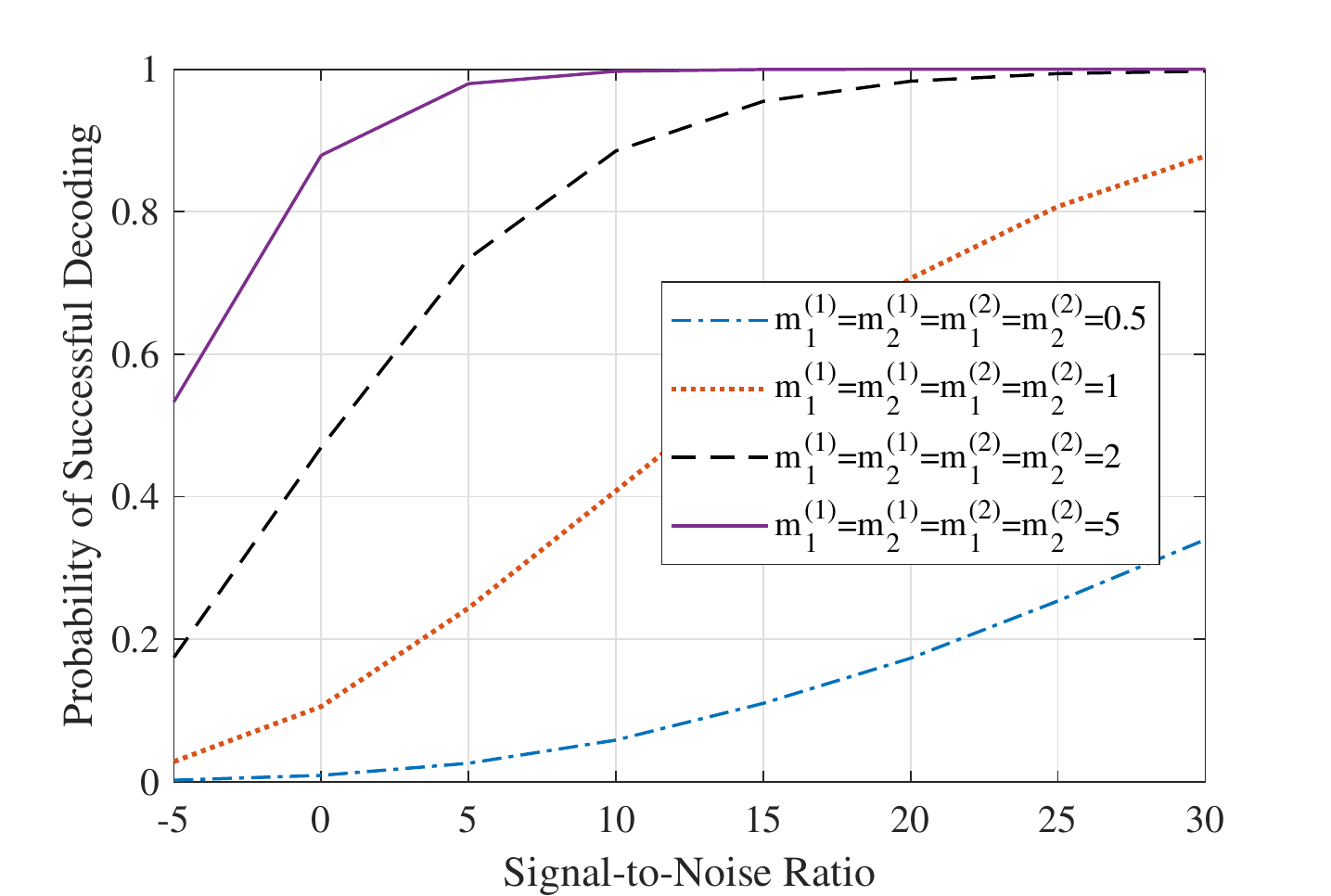}
 \caption{Performance comparison of split file caching-aided NOMA for the cost function given in \eqref{EqnSplitFile_alphalsrp5} with SNR, for different values of the parameters $\moneone=\monetwo=\mtwoone=\mtwotwo$, and $\omoneone=\omonetwo=\omtwoone=\omtwotwo=2$.} \label{FigSplitPsucVsSNRwithm}
\end{center}
\end{figure}

Next, we turn our attention to the case of split file caching discussed in Sec.~\ref{SecSplitFile}. In Fig. \ref{FigSplitPsucVsAlpBet}, we plot the probability of successful decoding given in the cost function of the optimization problem in \eqref{EqnSplitFile_alphalsrp5}, for different values of power allocation parameters $\alpha \in (0,0.5)$ and $\beta \in (0,1)$. The SINR threshold values are set to $\gamma_1^{(1)}=\gamma_2^{(1)}=\gamma_1^{(2)}=\gamma_2^{(2)}=0.25$. It is observed that the cost function is concave in both $\alpha$ and $\beta$. Therefore, the corresponding optimal parameter values $\alpha^*$ and $\beta^*$ can be found by techniques such as the bisection method \cite[Alg.~4.1]{Boyd_CVX}.

Finally, we study the effect of the fading parameters on the proposed  scheme. For simplicity, we again consider the cost function in \eqref{EqnSplitFile_alphalsrp5}, where $\alpha \leq 0.5$. The performance variation of the proposed scheme is   shown in Fig.~\ref{FigSplitPsucVsSNRwithOmega}, for different values of the Nakagami$-m$ parameter $\Omega$. In this scenario, we assume that $\omoneone=\omonetwo=\omtwoone=\omtwotwo=\Omega$, and $\moneone=\monetwo=\mtwoone=\mtwotwo=1$, which coincides with the corresponding Rayleigh case. It is also noted that since the average value of the fading distribution increases with an increase in $\Omega$, the optimal probability of successful decoding also increases. 
A similar trend can be also observed  in Fig.~\ref{FigSplitPsucVsSNRwithm}, where the parameter $\Omega$ is fixed to $\omoneone=\omonetwo=\omtwoone=\omtwotwo=2$ and the parameters $\moneone=\monetwo=\mtwoone=\mtwotwo=m$ are varied. It is evident that, as expected, the performance of the proposed technique increases with an increase in the parameter $m$.

\section{Conclusions and Future Work} \label{SecConc}

This work introduced  cache-aided NOMA in vehicular networks and quantified the achievable performance for different scenarios of interest.
 First, we considered the case of full file caching, where each vehicle caches and requests the whole  files. This was realized for the case of a two-user vehicular network wireless transmission over  double Nakagami$-m$ fading channels. To this effect, we determined  the optimum power allocation that maximizes the overall probability of successful decoding of files at each vehicle. 
 Then, we  considered the case of split file caching, where each file is partitioned into two parts. A joint power allocation optimization problem was formulated, where the power allocation across vehicles and for a single file was determined. In both cases,  it was shown that the associated cost function is concave in the power allocation variables. Furthermore,   a performance gain was exhibited for the  cache-aided NOMA compared to the  conventional NOMA counterpart. 
 
As a part of the future work, several generalizations for the split file caching in Sec.~\ref{SecSplitFile} are possible. For example, one may consider the case of splitting of files with sizes more than two. A more realistic model to consider is the spatio-temporally correlated fading channel \cite{Chen_JSAC_2017}, \cite{Ozdogan_TWC_2019}, given the mobility of both transmitter and receiver. In particular, similar to the Kronecker correlation-based model discussed in \cite{Chen_JSAC_2017}, and the tunnel model considered in \cite{Avazov_AWPL_2017}, a spatio-temporally correlated cascaded Nakagami model can be developed for the setup considered in this work.

\begin{figure*}[ht]
\begin{align}
& \frac{d^2}{d \alpha^2} \, \left[ \alpha ~ G_{2,4}^{2,2}\left(\left. \frac{\xi_1}{\alpha} \right \rvert_{m_1^{(1)}, m_2^{(1)}, 1, 0}^{\hspace{0.25cm} 0, 1 } \right) \right] = \mathcal{K}_{m_2^{(1)}-m_1^{(1)}} \left(2\sqrt{\frac{\xi_1}{\alpha}}\right) \left\{ 2 \left(\frac{\xi_1}{\alpha}\right)^{\frac{m_2^{(1)}+m_1^{(1)}}{2}} - \frac{\xi_1(m_1^{(1)} + m_2^{(1)})}{\alpha} \left(\frac{\xi_1}{\alpha}\right)^{\frac{m_1^{(1)}+m_2^{(1)}}{2}-1} \right\} \nonumber \\
& ~~~~~~~~~~~~~~~~~~~~~~~~~~~~~~~~~~~~~ - \left(\frac{\xi_1}{\alpha}\right)^{\frac{m_1^{(1)}+m_2^{(1)}-1}{2}} \left[-\mathcal{K}_{m_2^{(1)}-m_1^{(1)}-1} \left(2 \sqrt{\left(\frac{\xi_1}{\alpha}\right)}\right) -\mathcal{K}_{m_2^{(1)}-m_1^{(1)}+1} \left(2 \sqrt{\left(\frac{\xi_1}{\alpha}\right)}\right) \right] \tag{46a} \label{SecDer}
\end{align}
\hrulefill
\end{figure*}

\section*{Appendix: Proof of Theorem \ref{ThmMain}} \label{ThmMainProof}
We show that the second derivative of $\psa$ with respect to $\alpha$ is negative in $0 \leq \alpha \leq 1$. It is recalled that $\psvonea$ and $\psvtwoa$ are given by \eqref{CDFForV1}, for $g = \frac{\gamma_1 \sigma_1^2}{\alpha P}$ and $g = \frac{\gamma_2 \sigma_2^2}{(1-\alpha) P}$, respectively. It is also recalled that 
\begin{align}
& \frac{d^u}{dz^u} G_{p,q}^{m,n}\left(\left. \frac{1}{z} \right \rvert_{b_1,\ldots,b_m,b_{m+1},\ldots,b_q} ^{a_1,\ldots,a_n,a_{n+1},\ldots,a_p}\right) = (-1)^u z^u  \nonumber \\
& ~~~~~ G_{p+1,q+1}^{m,n+1} \left(\left. \frac{1}{z} \right \rvert_{b_1,\ldots,b_m,1,b_{m+1},\ldots,b_q}^{1-u,a_1,\ldots,a_n,a_{n+1},\ldots,a_p}  \right), u=1,2,\cdots.
\end{align}
Based on this,  it is noted that the term corresponding to $\psvonea$ can be expressed  as 
\begin{align}
& \dfrac{d}{d\alpha} \left[ \frac{G_{1,3}^{2,1}\left(\left. \frac{m_1^{(1)} m_2^{(1)} \gamma_1 \sigma_1^2}{\Omega_1^{(1)} \Omega_2^{(1)} P \alpha} \right \rvert_{m_1^{(1)}, m_2^{(1)},0}^{\hspace{0.5cm} \text{1} } \right)}{\Gamma(m_1^{(1)}) \Gamma(m_2^{(1)})} \right] = \frac{-\frac{\Omega_1^{(1)} \Omega_2^{(1)} P \alpha}{m_1^{(1)} m_2^{(1)} \gamma_1 \sigma_1^2}}{\Gamma(m_1^{(1)}) \Gamma(m_2^{(1)})} \nonumber \\
& ~~~~~~~~~~~~~~~ \times G_{2,4}^{2,2}\left(\left. \frac{m_1^{(1)} m_2^{(1)} \gamma_1 \sigma_1^2}{\Omega_1^{(1)} \Omega_2^{(1)} P \alpha} \right \rvert_{m_1^{(1)}, m_2^{(1)},1,0}^{\hspace{0.5cm} 0, 1 } \right). \label{EqnwithG}
\end{align}
Now, let us define
\begin{align}
\xi_1 \triangleq \frac{m_1^{(1)} m_2^{(1)} \gamma_1 \sigma_1^2}{\Omega_1^{(1)} \Omega_2^{(1)} P} > 0.
\end{align}
Also, since $m_1 > 0$ and $m_2 > 0$, it follows that 
\begin{align}
& G_{2,4}^{2,2}\left(\left. \frac{\xi_1}{\alpha} \right \rvert_{m_1^{(1)}, m_2^{(1)}, 1, 0}^{\hspace{0.25cm} 0, 1 } \right) = 2 \left(\frac{\xi}{\alpha}\right)^{\frac{m_1^{(1)}+m_2^{(1)}}{2}} \nonumber \\
& ~~~~~~~~~~~~~~~~~~~~~~~~~~~~~~~~~ \times \mathcal{K}_{m_2^{(1)}-m_1^{(1)}} \left(2\sqrt{\frac{\xi_1}{\alpha}}\right).  \label{EqnGSimp}
\end{align}
Evidently, by simplifying \eqref{EqnwithG},  substituting \eqref{EqnGSimp} and finding the second derivative yields a term which can be simplified as given in \eqref{SecDer}, at the top of this page. Similar expressions can be derived for the first and second derivatives of $\psvtwoa$. In addition, utilizing the Bessel function  identity:
\begin{align}
\mathcal{K}_{-\nu}(x) = \mathcal{K}_{\nu}(x), ~~ \nu=0,1,\cdots,
\end{align}
assists  in simplifying $\mathcal{K}_{m_2^{(1)}-m_1^{(1)}}(\cdot)$ irrespective of whether $m_1^{(1)}$ is greater than or lesser than $m_2^{(1)}$. It is also noted that the following inequality holds:
\begin{align}
\mathcal{K}_{\mu}(x) \geq \mathcal{K}_{\nu}(x), ~~\mu,\nu=0,1,\cdots, ~~ x \in \mathbb{R}^+.
\end{align}
Simplifying the first and second derivatives of both $\psvonea$ and $\psvtwoa$ using the above results, it can be shown that the second derivative of $\psa$ is negative, and their respective first derivatives can be shown to have positive and negative values in the limiting cases as $\alpha \rightarrow 0$ and $\alpha \rightarrow 1$, respectively.   Based on this and after some algebraic manipulations,  $\psa$ is shown to be concave in $0 \leq \alpha \leq 1$, which completes the proof. 

\balance
\bibliographystyle{IEEEtran}
\bibliography{IEEEabrv,NOMACachingV2VRef}

% Generated by IEEEtran.bst, version: 1.13 (2008/09/30)
\begin{thebibliography}{10}
\providecommand{\url}[1]{#1}
\csname url@samestyle\endcsname
\providecommand{\newblock}{\relax}
\providecommand{\bibinfo}[2]{#2}
\providecommand{\BIBentrySTDinterwordspacing}{\spaceskip=0pt\relax}
\providecommand{\BIBentryALTinterwordstretchfactor}{4}
\providecommand{\BIBentryALTinterwordspacing}{\spaceskip=\fontdimen2\font plus
\BIBentryALTinterwordstretchfactor\fontdimen3\font minus
  \fontdimen4\font\relax}
\providecommand{\BIBforeignlanguage}[2]{{%
\expandafter\ifx\csname l@#1\endcsname\relax
\typeout{** WARNING: IEEEtran.bst: No hyphenation pattern has been}%
\typeout{** loaded for the language `#1'. Using the pattern for}%
\typeout{** the default language instead.}%
\else
\language=\csname l@#1\endcsname
\fi
#2}}
\providecommand{\BIBdecl}{\relax}
\BIBdecl

\bibitem{Tuohy_ITS_2015}
S.~Tuohy, M.~Glavin, C.~Hughes, E.~Jones, M.~Trivedi, and L.~Kilmartin,
  ``Intra-vehicle networks: {A} review,'' \emph{{IEEE} Trans. Intellig. Transp.
  Sys.}, vol.~16, no.~2, pp. 534--545, Apr. 2015.

\bibitem{Gerla_WF-IoT_2014}
M.~Gerla, E.~Lee, G.~Pau, and U.~Lee, ``Internet of vehicles: {F}rom
  intelligent grid to autonomous cars and vehicular clouds,'' in
  \emph{Proc.~{IEEE} World Forum on Internet of Things {(WF-IoT)}}, Mar. 2014,
  pp. 241--246.

\bibitem{Vinel_ComLet_2012}
A.~Vinel, ``{3GPP LTE} versus {IEEE 802.11p/WAVE: W}hich technology is able to
  support cooperative vehicular safety applications?'' \emph{IEEE Wireless
  Commun.\ Lett.}, vol.~1, no.~2, pp. 125--128, Apr. 2012.

\bibitem{Li_CST_2018}
L.~Li, G.~Zhao, and R.~S. Blum, ``A survey of caching techniques in cellular
  networks: {R}esearch issues and challenges in content placement and delivery
  strategies,'' \emph{{IEEE} Commun. Surveys Tuts.}, vol.~20, no.~3, pp.
  1710--1732, Third Quarter 2018.

\bibitem{Maddah-Ali_TCOMMag_2016}
M.~A. Maddah-Ali and U.~Niesen, ``Coding for caching: fundamental limits and
  practical challenges,'' \emph{{IEEE} Commun. Mag.}, vol.~54, no.~8, pp.
  23--29, Aug. 2016.

\bibitem{Yang_TWC_2016}
C.~Yang, Y.~Yao, Z.~Chen, and B.~Xia, ``Analysis on cache-enabled wireless
  heterogeneous networks,'' \emph{IEEE Trans.\ Wireless Commun.}, vol.~15,
  no.~1, pp. 131--145, Jan. 2016.

\bibitem{Cui_TWC_2017}
Y.~Cui and D.~Jiang, ``Analysis and optimization of caching and multicasting in
  large-scale cache-enabled heterogeneous wireless networks,'' \emph{IEEE
  Trans.\ Wireless Commun.}, vol.~16, no.~1, pp. 250--264, Jan. 2017.

\bibitem{Ji_TIT_2016}
M.~Ji, G.~Caire, and A.~F. Molisch, ``Fundamental limits of caching in wireless
  {D2D} networks,'' \emph{{IEEE} Trans. Inf. Theory}, vol.~62, no.~2, pp.
  849--869, Feb. 2016.

\bibitem{Gregori_JSAC_2016}
M.~Gregori, J.~G\'omez-Vilardebó, J.~Matamoros, and D.~G\"und\"uz, ``Wireless
  content caching for small cell and {D2D} networks,'' \emph{{IEEE} J. Sel.
  Areas Commun.}, vol.~34, no.~5, pp. 1222--1234, May 2016.

\bibitem{Tondon_ISIT_2016}
R.~Tandon and O.~Simeone, ``Cloud-aided wireless networks with edge caching:
  Fundamental latency trade-offs in fog radio access networks,'' in
  \emph{Proc.~IEEE International Symposium on Information Theory {(ISIT)}},
  Jul. 2016, pp. 2029--2033.

\bibitem{Liu_IEEE_2017}
Y.~{Liu}, Z.~{Qin}, M.~{Elkashlan}, Z.~{Ding}, A.~{Nallanathan}, and
  L.~{Hanzo}, ``Nonorthogonal multiple access for {5G} and beyond,''
  \emph{Proc. IEEE}, vol. 105, no.~12, pp. 2347--2381, Dec. 2017.

\bibitem{Wan_TWC_2018}
D.~{Wan}, M.~{Wen}, F.~{Ji}, H.~{Yu}, and F.~{Chen}, ``Non-orthogonal multiple
  access for cooperative communications: {C}hallenges, opportunities, and
  trends,'' \emph{IEEE Trans.\ Wireless Commun.}, vol.~25, no.~2, pp. 109--117,
  Apr. 2018.

\bibitem{Islam_TWC_2018}
S.~M.~R. {Islam}, M.~{Zeng}, O.~A. {Dobre}, and K.~{Kwak}, ``Resource
  allocation for downlink {NOMA} systems: {K}ey techniques and open issues,''
  \emph{IEEE Trans.\ Wireless Commun.}, vol.~25, no.~2, pp. 40--47, Apr. 2018.

\bibitem{Kader_DSP_2019}
\BIBentryALTinterwordspacing
M.~F. Kader, M.~B. Uddin, S.~R. Islam, and S.~Y. Shin, ``Capacity and outage
  analysis of a dual-hop decode-and-forward relay-aided {NOMA} scheme,''
  \emph{Digital Signal Processing}, vol.~88, pp. 138 -- 148, 2019. [Online].
  Available:
  \url{http://www.sciencedirect.com/science/article/pii/S1051200418303968}
\BIBentrySTDinterwordspacing

\bibitem{Islam_5GTechFocus_2017}
S.~M.~R. {Islam}, M.~{Zeng}, and O.~{Dobre}, ``{NOMA} in {5G} systems:
  {E}xciting possibilities for enhancing spectral efficiency,'' \emph{IEEE 5G
  Tech Focus}, vol.~1, no.~2, Jun. 2017.

\bibitem{Marshoud_TWC_2018}
H.~Marshoud, S.~Muhaidat, P.~C. Sofotasios, S.~Hussain, M.~A. Imran, and B.~S.
  Sharif, ``Optical non-orthogonal multiple access for visible light
  communication,'' \emph{IEEE Trans.\ Wireless Commun.}, vol.~25, no.~2, pp.
  82--88, Apr. 2018.

\bibitem{Chen_JSAC_2017}
Y.~Chen, L.~Wang, Y.~Ai, B.~Jiao, and L.~Hanzo, ``Performance analysis of
  {NOMA-SM} in vehicle-to-vehicle massive {MIMO} channels,'' \emph{{IEEE} J.
  Sel. Areas Commun.}, vol.~35, no.~12, pp. 2653--2666, Dec. 2017.

\bibitem{Ding_JSAC_2017}
Z.~Ding, X.~Lei, G.~K. Karagiannidis, R.~Schober, J.~Yuan, and V.~K. Bhargava,
  ``A survey on non-orthogonal multiple access for {5G} networks: Research
  challenges and future trends,'' \emph{{IEEE} J. Sel. Areas Commun.}, vol.~35,
  no.~10, pp. 2181--2195, Oct. 2017.

\bibitem{Zheng_WCL_2017}
M.~Zeng, A.~Yadav, O.~A. Dobre, G.~I. Tsiropoulos, and H.~V. Poor, ``On the sum
  rate of {MIMO-NOMA} and {MIMO-OMA} systems,'' \emph{IEEE Wireless Commun.\
  Lett.}, vol.~6, no.~4, pp. 534--537, Aug. 2017.

\bibitem{Wang_CST_2018}
C.~Wang, Y.~He, F.~R. Yu, Q.~Chen, and L.~Tang, ``Integration of networking,
  caching, and computing in wireless systems: {A} survey, some research issues,
  and challenges,'' \emph{{IEEE} Commun. Surveys Tuts.}, vol.~20, no.~1, pp.
  7--38, Firstquarter 2018.

\bibitem{Zaidi_ICCW_2015}
S.~A.~R. Zaidi, M.~Ghogho, and D.~C. McLernon, ``Information centric modeling
  for two-tier cache enabled cellular networks,'' in \emph{Proc.~IEEE
  International Conference on Communication Workshop (ICCW)}, Jun. 2015, pp.
  80--86.

\bibitem{Bastug_ISWCS_2014}
E.~Bastug, M.~Bennis, and M.~Debbah, ``Cache-enabled small cell networks:
  {M}odeling and tradeoffs,'' in \emph{Proc.~11th International Symposium on
  Wireless Communications Systems (ISWCS)}, Aug. 2014, pp. 649--653.

\bibitem{Zhang_TCOM_2014}
L.~Zhang, H.~Yang, and M.~O. Hasna, ``Generalized area spectral efficiency:
  {A}n effective performance metric for green wireless communications,''
  \emph{{IEEE} Trans. Commun.}, vol.~62, no.~2, pp. 747--757, Feb. 2014.

\bibitem{Glass_CST_2017}
S.~Glass, I.~Mahgoub, and M.~Rathod, ``Leveraging {MANET}-based cooperative
  cache discovery techniques in {VANET}s: {A} survey and analysis,''
  \emph{{IEEE} Commun. Surveys Tuts.}, vol.~19, no.~4, pp. 2640--2661,
  Fourthquarter 2017.

\bibitem{Majd_GLOBECOM_2014}
N.~E. Majd, S.~Misra, and R.~Tourani, ``Split-cache: {A} holistic caching
  framework for improved network performance in wireless ad hoc networks,'' in
  \emph{Proc.~IEEE Global Communications Conference (GLOBECOM)}, Dec. 2014, pp.
  137--142.

\bibitem{Caetano_ICNC_2010}
M.~F. Caetano and J.~L. Bordim, ``A cluster based collaborative cache approach
  for {MANET}s,'' in \emph{Proc.~First International Conference on Networking
  and Computing}, Nov. 2010, pp. 104--111.

\bibitem{Dai_TComMag_2015}
L.~Dai, B.~Wang, Y.~Yuan, S.~Han, C.~I, and Z.~Wang, ``Non-orthogonal multiple
  access for 5g: solutions, challenges, opportunities, and future research
  trends,'' \emph{{IEEE} Commun. Mag.}, vol.~53, no.~9, pp. 74--81, Sep. 2015.

\bibitem{Han_Access_2018}
T.~{Han}, J.~{Gong}, X.~{Liu}, S.~M.~R. {Islam}, Q.~{Li}, Z.~{Bai}, and K.~S.
  {Kwak}, ``On downlink {NOMA} in heterogeneous networks with non-uniform small
  cell deployment,'' \emph{IEEE Access}, vol.~6, pp. 31\,099--31\,109, 2018.

\bibitem{Di_GLOBECOM_2017}
B.~Di, L.~Song, Y.~Li, and G.~Y. Li, ``{NOMA}-based low-latency and
  high-reliable broadcast communications for {5G V2X} services,'' in
  \emph{Proc.~IEEE Global Communications Conference (GLOBECOM)}, Dec. 2017, pp.
  1--6.

\bibitem{Di_TWC_2017}
B.~Di, L.~Song, Y.~Li, and Z.~Han, ``{V2X} meets {NOMA}: {N}on-orthogonal
  multiple access for {5G}-enabled vehicular networks,'' \emph{IEEE Trans.\
  Wireless Commun.}, vol.~24, no.~6, pp. 14--21, Dec. 2017.

\bibitem{Khoueiry_VTC_2017}
B.~W. Khoueiry and M.~R. Soleymani, ``An efficient noma {V2X} communication
  scheme in the internet of vehicles,'' in \emph{Proc.~IEEE 85th Vehicular
  Technology Conference (VTC Spring)}, Jun. 2017, pp. 1--7.

\bibitem{Qian_JSAC_2017}
L.~P. Qian, Y.~Wu, H.~Zhou, and X.~Shen, ``Dynamic cell association for
  non-orthogonal multiple-access {V2S} networks,'' \emph{{IEEE} J. Sel. Areas
  Commun.}, vol.~35, no.~10, pp. 2342--2356, Oct. 2017.

\bibitem{Doan_ICC_2018}
K.~N. Doan, W.~Shin, M.~Vaezi, H.~V. Poor, and T.~Q.~S. Quek, ``Optimal power
  allocation in cache-aided non-orthogonal multiple access systems,'' in
  \emph{Proc.\ IEEE International Conference on Communications Workshops (ICC
  Workshops)}, May 2018, pp. 1--6.

\bibitem{Ding_TCOM_2018}
Z.~Ding, P.~Fan, G.~K. Karagiannidis, R.~Schober, and H.~V. Poor, ``Noma
  assisted wireless caching: {S}trategies and performance analysis,''
  \emph{{IEEE} Trans. Commun.}, vol.~66, no.~10, pp. 4854--4876, Oct. 2018.

\bibitem{Karagiannidis_TCOM_2007}
G.~K. Karagiannidis, N.~C. Sagias, and P.~T. Mathiopoulos,
  ``{N}${\ast}${N}akagami: {A} novel stochastic model for cascaded fading
  channels,'' \emph{{IEEE} Trans. Commun.}, vol.~55, no.~8, pp. 1453--1458,
  Aug. 2007.

\bibitem{Ilhan_TVT_2009}
H.~Ilhan, M.~Uysal, and I.~Altunbas, ``Cooperative diversity for intervehicular
  communication: {P}erformance analysis and optimization,'' \emph{{IEEE} Trans.
  Veh. Technol.}, vol.~58, no.~7, pp. 3301--3310, Sep. 2009.

\bibitem{refGradBk}
I.~Gradshteyn and I.~Ryzhik, \emph{Tables of integrals, series and products},
  7th~ed.\hskip 1em plus 0.5em minus 0.4em\relax Academic Press, 2007.

\bibitem{Salo_TAP_2006}
J.~Salo, H.~M. El-Sallabi, and P.~Vainikainen, ``The distribution of the
  product of independent {R}ayleigh random variables,'' \emph{{IEEE} Trans.
  Antennas Propag.}, vol.~54, no.~2, pp. 639--643, Feb. 2006.

\bibitem{Boyd_CVX}
S.~Boyd and L.~Vandenberghe, \emph{Convex Optimization}.\hskip 1em plus 0.5em
  minus 0.4em\relax New York, NY, USA: Cambridge University Press, 2004.

\bibitem{Ozdogan_TWC_2019}
O.~{Ozdogan}, E.~{Bjornson}, and E.~G. {Larsson}, ``Massive {MIMO} with
  spatially correlated {R}ician fading channels,'' \emph{IEEE Trans.\ Wireless
  Commun.}, pp. 1--1, 2019.

\bibitem{Avazov_AWPL_2017}
N.~{Avazov}, S.~M.~R. {Islam}, D.~{Park}, and K.~S. {Kwak}, ``Statistical
  characterization of a {3-D} propagation model for {V2V} channels in
  rectangular tunnels,'' \emph{{IEEE} Antennas Wireless Propag. Lett.},
  vol.~16, pp. 2392--2395, Jun. 2017.

\end{thebibliography}
\end{document}